\theoremstyle{plain}
\newtheorem{assumption}{\protect\assumptionname}
\theoremstyle{plain}
\newtheorem{thm}{\protect\theoremname}
\theoremstyle{remark}
\newtheorem{rem}{\protect\remarkname}
\providecommand{\assumptionname}{Assumption}
\providecommand{\remarkname}{Remark}
\providecommand{\theoremname}{Theorem}
\begin{document}
\title{Identification of time-varying counterfactual parameters in nonlinear
panel models}
\date{\today}
\author{Irene Botosaru and Chris Muris\thanks{McMaster University, Department of Economics. Emails: botosari@mcmaster.ca
and muerisc@mcmaster.ca. We are grateful to an anonymous referee,
and to Xavier D\textquoteright Haultfoeuille, Jiaying Gu, Bo Honore,
Shakeeb Khan, Krishna Pendakur, and Alexandre Poirier for helpful
comments. We thank participants at the Oxford Panel Data Workshop,
the 2023 International Panel Data Conference, CIREQ Econometrics Conference
2023, and at seminars at Georgetown University and the University
of Georgia Athens for comments and suggestions. We gratefully acknowledge
financial support from the Social Sciences and Humanities Research
Council of Canada under grant IG 435-2021- 0778. This research was
undertaken, in part, thanks to funding from the Canada Research Chairs
Program.}}
\maketitle
\begin{abstract}
We develop a general framework for the identification of counterfactual
parameters in a class of nonlinear semiparametric panel models with
fixed effects and time effects. Our method applies to models for discrete
outcomes (e.g., two-way fixed effects binary choice) or continuous
outcomes (e.g., censored regression), with discrete or continuous
regressors. Our results do not require parametric assumptions on the
error terms or time-homogeneity on the outcome equation. Our main
results focus on static models, with a set of results applying to
models without any exogeneity conditions. We show that the survival
distribution of counterfactual outcomes is identified (point or partial)
in this class of models. This parameter is a building block for most
partial and marginal effects of interest in applied practice that
are based on the average structural function as defined by \citet{BlundellPowell2003,BlundellPowell2004}.
To the best of our knowledge, ours are the first results on average
partial and marginal effects for binary choice and ordered choice
models with two-way fixed effects and non-logistic errors.\\
\\
\textbf{JEL classification}: C14; C23; C41.\\
\textbf{Keywords: }index model; panel data; fixed effects; average
structural function; semiparametric; binary choice; discrete choice;
censored regression.
\end{abstract}

\section{Introduction}

We study counterfactual or policy parameters for nonlinear panel
models with structural equation
\begin{align}
Y_{it}\left(x\right) & =h_{t}\left(\alpha_{i}+x\beta-U_{it}\right),\text{ \ensuremath{t=1,2,\dots,T},}\label{eq:potential_outcome_equation}
\end{align}
where $i$ indexes individual units, $t$ indexes time, $h_{t}$ is
a weakly-monotone transformation function that can vary over $t$
in an unrestricted way, $\beta\in\mathbb{R}^{k}$ is a vector of regression
coefficients, $\alpha_{i}$ is an individual-specific effect, and
$U_{it}$ is a stochastic error term.\footnote{There is a large literature on the identification and estimation of
structural parameters in the cross-sectional version of this model,
e.g., see the work on single index models by \citet{Han1987}, \citet{PowellStockStoker},
\citet{Ichimura}, \citet{Ahn2018}, and references therein.} For each unit $i$, we observe covariates $X_{it}$, $t=1,\dots,T$.
The dependence between $\alpha_{i}$ and $X_{i}=\left(X_{i1},\dots,X_{iT}\right)$
is left unrestricted, so that $\alpha_{i}$ is a fixed effect, c.f.,
e.g., \citet{GrahamPowell2012}. The class of models with outcome
equation as in (\ref{eq:potential_outcome_equation}) includes the
binary choice model with two-way fixed effects, the ordered choice
model with fixed effects and time-varying cut-offs, the censored regression
model with time-varying censoring, and various transformation models
for continuous dependent variables.\footnote{For example, letting $v\equiv\alpha_{i}+x\beta-U_{it}$, the structural
equation for the binary choice model with two-way fixed effects is
$h_{t}\left(v\right)=1\left\{ v\geq\lambda_{t}\right\} $, for the
ordered choice model with time-varying cutoffs it is $h_{t}\left(v\right)=\sum_{j=1}^{J}1\left\{ v\geq\lambda_{jt}\right\} $,
$J\in\mathbb{N}$, and for censored regression with time-varying censoring
it is $h_{t}\left(v\right)=\max\left\{ \lambda_{t},v\right\} $.}

For a subpopulation of individuals defined by their sequence of regressor
values $X_{i}$, our parameter of interest is the counterfactual survival
probability:\footnote{The counterfactual survival probability answers the question ``For
a subpopulation defined by their sequence of regressor values $X_{i}$,
what is the \textit{ceteris paribus} probability that their period-$t$
outcome $Y_{it}$ exceeds $y$ if their period-$t$ regressor values
were exogenously set to $x$?''}
\begin{equation}
\tau_{t,x,y}\left(X_{i}\right)\equiv P\left(\left.Y_{it}\left(x\right)\geq y\right|X_{i}\right),\text{ \ensuremath{t=1,2,\dots,T},}\label{eq:parameter_of_interest}
\end{equation}
where $Y_{it}\left(x\right)$ is given by (\ref{eq:potential_outcome_equation}),
$x$ is a fixed counterfactual value of the period-$t$ regressors,
and $y\in\underline{\mathcal{Y}}\equiv\mathcal{Y}\setminus\text{inf}\mathcal{Y}$
is a fixed cut-off value, where $\mathcal{Y}\subseteq\mathbb{R}$
denotes the support of the observed $Y_{it}=Y_{it}\left(X_{it}\right)$.\footnote{Conditioning on the sequence $X_{i}$ allows us to identify the same
parameter across different exogeneity and time-stationarity assumptions.} The parameter in (\ref{eq:parameter_of_interest}) is a building
block for most partial and marginal effects of interest in applied
practice that are based on the average structural function (ASF) as
defined in the pioneering work of \citet{BlundellPowell2003,BlundellPowell2004}.
For example, when $Y_{it}\left(x\right)$ is non-negative , the ASF
at time $t$ can be obtained as:\footnote{See, for example, \citet{SongWang} for the integrated tail probability
expectation formula that uses a survival function as defined in (\ref{eq:parameter_of_interest}).}
\begin{equation}
ASF_{t}\left(x\right)=\mathbb{E}\left[\mathbb{E}\left(\left.Y_{it}\left(x\right)\right|X_{i}\right)\right]=\mathbb{E}\left(\int_{0}^{\infty}\tau_{t,x,y}\left(X_{i}\right)dy\right).\label{eq:ASF}
\end{equation}
The ASF can then be used to define partial effects based on partial
derivatives or marginal effects based on discrete differences, see,
e.g., \citet{LinWooldridge2015}.

The challenge is to identify (\ref{eq:parameter_of_interest}) in
nonlinear panel models with structural equation as in (\ref{eq:potential_outcome_equation})
when $\alpha_{i}$ are fixed effects and $T<\infty$. To see that
this is challenging, consider the special case of the binary choice
model with two-way fixed effects. A recent literature has made progress
in identifying certain counterfactual parameters for this model provided
that the error terms follow a standard logistic distribution, see
e.g. \citet{aguirregabiriaidentification}, \citet{davezies2021identification},
\citet{dobronyi2021identification}.\footnote{Earlier work by \citet{honore2006boundson} provides partial identification
of marginal effects under a more general structure with dynamics and
arbitrary but known error term distributions. See also \citet{PakelWeidner}
for an approach that applies to parametric models covered by the results
in \citet{bonhomme2012functional}. Finally, see \citet{honore_censored_marginal}
for results on marginal effects for the censored regression model.}

A separate literature provides identification results under time-homogeneity
assumptions that do not allow for arbitrary time-effects, see the
benchmark results in \citet{hoderlein2012nonparametric}, \citet{chernozhukov2013average},
and \citet{chernozhukov2015nonparametric}.\footnote{There, the authors consider a nonseparable structural function and
impose no parametric assumptions on the error terms. When $h_{t}=h$,
the class we models we study here is nested in their analysis.} To the best of our knowledge, nothing is known about the identification
of the ASF for the binary choice model with two way fixed effects
without logistic errors.\footnote{We do not consider here the case of correlated random effects or the
case of large-$T$. Progress on counterfactual parameters for the
former case has been made by, e.g., \citet{ArellanoCarrasco2003},
\citet{altonji2005crosssection}, \citet{bester2009identification},
\citet{ChenKhanTang2019}, \citet{liu2021identification}, while for
the latter by, e.g., \citet{FernandezVal2009}, \citet{FernandezValWeidner2018},
and \citet{bartolucci_partial_effect2023}.}

We derive (partial) identification results for (\ref{eq:parameter_of_interest})
without parametric restrictions on the distribution of $U_{it}$ for
panel models with outcome equation as in (\ref{eq:potential_outcome_equation}).
Our results are for short-$T$. Relevant examples of models to which
our results apply are (i) binary choice with two-way fixed effects
and nonlogistic errors, (ii) ordered choice with time-varying thresholds,
(iii) censored regression with time-varying censoring. Additionally,
since (\ref{eq:parameter_of_interest}) varies over time whenever
$h_{t}$ is time-varying, policy parameters that are functionals of
(\ref{eq:parameter_of_interest}) are also time-varying.

For nonseparable panel models with fixed effects, the results in benchmark
work such as \citet{hoderlein2012nonparametric}, \citet{chernozhukov2013average},
and \citet{chernozhukov2015nonparametric} establish limitations on
what can be learned from panel data in terms of counterfactual parameters.
By imposing additional structure on the latent outcome, such as additivity
in a linear index and the fixed effects, we show that (partial) identification
of counterfactual parameters can be obtained without time-homogeneity
assumptions on the outcome equation and no parametric assumptions
on the distribution of the error terms. Because we make no time-homogeneity
assumptions on $h_{t}$, the counterfactual parameters can vary over
time in an arbitrary way. To the best of our knowledge, it is the
combination of time-varyingness of the outcome equation (hence, of
the counterfactual parameters) and no parametric distributional assumptions
on the error terms that constitutes our contribution relative to the
literature on partial effects in nonlinear panel models with fixed
effects.

For our results, we treat $\beta$ and $h_{t}$ as given, i.e. either
known or previously point- or partially-identified.\footnote{\label{fn:Sufficient-conditions-ID}Sufficient conditions for the
identification of $\beta$ and time-varying $h_{t}$ for models with
structural equation as in (\ref{eq:potential_outcome_equation}) are
provided in \citet{BotosaruMuris2017} and \citet{BotosaruMurisPendakur2021}
under strict exogeneity and weak monotonicity of $h_{t}$, and \citet{BotosaruMurisSokullu2022}
under endogeneity and strict invertibility of $h_{t}$. With a parametric
structure on both $h_{t}$ and the distribution of the stochastic
errors, one may use the results in \citet{bonhomme2012functional}.
Consistent estimators for $\beta$ or/and time-invariant transformation
$h_{t}=h$ in nonlinear panel models without parametric assumptions
on the error terms have been derived by, e.g., \citet{abrevaya1999leapfrog},
\citet{Chen2010}, \citet{ChenWang2018}, \citet{WangChen2020}, \citet{ChenLuWang2022}
and references therein. For specific panel models, such as binary
choice, ordered choice, linear models, duration models, censored regression,
see, e.g., \citet{manski1987semiparametric}, \citet{Honore1992},
\citet{Honore1993}, \citet{HorowitzLee2004}, \citet{ChenDahlKahn},
\citet{Lee2008}, \citet{muris2017estimation}. Recent work on sharp
identification regions for structural parameters includes \citet{KhanPonomarevaTamer2011},
\citet{KhanPonomarevaTamer2016}, \citet{HonoreHu2020}, \citet{KhanPonomarevaTamer2021},
and \citet{aristodemou2021semiparametric}. See also \citet{Ghanem2017}
on testing identifying assumptions in the class of models we consider
here.} Our key insight is that the linear index structure allows us to classify
each observed probability at time $s$,
\begin{equation}
P\left(\left.Y_{is}\geq y^{\prime}\right|X_{i}\right),\;s=1,\dots,T,y^{\prime}\in\underline{\mathcal{Y}},\label{eq:obsats}
\end{equation}
as either an upper bound on $\tau_{t,x,y}\left(X_{i}\right)$, a lower
bound on $\tau_{t,x,y}\left(X_{i}\right)$, or both. We show that
(\ref{eq:obsats}) is an upper bound only if the \emph{observed index
at time $s$,} $X_{is}\beta-h_{s}^{-}\left(y\right)$, is at least
the \emph{counterfactual index }at time $t$, $x\beta-h_{t}^{-}\left(y\right)$;
and a lower bound only if the observed index at time $s$ is at most
the counterfactual index at time $t$. Without any additional time-stationarity
or exogeneity assumptions, bounds on the period-$t$ counterfactual
probability $\tau_{t,x,y}\left(X_{i}\right)$ can be constructed from
the period-$t$ observed probabilities $P\left(\left.Y_{it}\geq y^{\prime}\right|X_{i}\right)$,
$y^{\prime}\in\underline{\mathcal{Y}}$. Under a conditional time-stationarity
assumption on the errors, outcomes from \emph{all periods }are informative
for the period-$t$ counterfactual probability. The bounds under conditional
time-stationarity are tighter than those without exogeneity assumptions.
Point identification is obtained when the transformation function
$h_{t}$ is invertible or when the counterfactual index equals one
of the observed indices.

The remainder of this paper is organized as follows. Section \ref{sec:model_and_results}
presents our main results: Section \ref{subsec:No-exogeneity-assumptions}
constructs bounds without any additional time-stationarity or exogeneity
assumptions, while Section \ref{subsec:Conditional-time-stationarity}
constructs bounds under a conditional time-stationarity assumption
on the error terms. Section \ref{sec:Examples} applies our results
to a few examples: binary choice model, ordered choice model, and
censored regression. We present a numerical experiment for the binary
choice model in Section \ref{subsec:Two-way-binary-choice-contX}
and one for the ordered choice model in Section \ref{subsec:Staggered-adoption-with}.
All proofs and an additional numerical experiment can be found in
the Appendix.

\section{Main results\label{sec:model_and_results}}

We provide two results on the identification of $\tau_{t,x,y}$ defined
in (\ref{eq:parameter_of_interest}). Our first result in Theorem
\ref{thm:insight_no_exogeneity} provides bounds on $\tau_{t,x,y}$
without imposing any exogeneity assumptions or time-stationarity assumptions
on $U_{it},\,X_{i}$. Our second result in Theorem \ref{thm:intersection_bound_strict_exogeneity}
uses a conditional time-stationarity assumption on $\left.U_{it}\right|\alpha_{i},X_{i}$
to tighten those bounds. Conditioning on the sequence $X_{i}$ allows
us to identify the same parameter, $\tau_{t,x,y}$, across different
exogeneity and time-stationarity assumptions.

The following two assumptions are maintained throughout the paper:
\begin{assumption}
\label{assu:observability}(i) The distribution of panel data $\left(X_{i1},\cdots,X_{iT},Y_{i1},\cdots,Y_{iT}\right)$
is observed, where $Y_{it}=Y_{it}\left(X_{it}\right)$ is generated
by (\ref{eq:potential_outcome_equation}); (ii) $\beta$ and $h_{t}$
in (\ref{eq:potential_outcome_equation}) are either known, point-identified,
or partially-identified.
\end{assumption}
Footnote \ref{fn:Sufficient-conditions-ID} lists work that provides
sufficient assumptions for either the point- or the partial-identification
of $\beta$ and $h_{t}$.
\begin{assumption}
\label{assu:weak_mono} For each $t$, $h_{t}:\mathbb{R}\rightarrow\mathcal{Y}\subseteq\mathbb{R}$
is weakly-monotone and right-continuous.
\end{assumption}
Assumption \ref{assu:weak_mono} does not restrict the way that $h_{t}$
can vary over $t$. Our results apply to the case of time-invariant
$h_{t}=h$, in which case parameters such as (\ref{eq:parameter_of_interest})
and (\ref{eq:ASF}) are also time-invariant. This assumption allows
$h_{t}$ to have flat parts and jumps, or to be continuous. Hence,
our setting accommodates both discrete and continuous outcomes.

Given Assumption \ref{assu:weak_mono}, we define the generalized
inverse of $h_{t}$ as:\footnote{Existence of $h_{t}^{-}$ is ensured by Assumption \ref{assu:weak_mono}.}
\begin{equation}
h_{t}^{-}\left(y\right)\equiv\inf\left\{ y^{*}:h_{t}\left(y^{*}\right)\geq y\right\} ,y\in\mathcal{\underline{\mathcal{Y}}},\label{eq:generalizedinverse}
\end{equation}
i.e. it is the smallest value of the latent variable that yields a
value of the observed outcome $Y_{t}\geq y$.

For what follows we fix $t,x$ and we fix $y\in\underline{\mathcal{Y}}$.

\subsection{No additional exogeneity or time-stationarity assumptions\label{subsec:No-exogeneity-assumptions}}

For our first set of results, we define the following sets:

\begin{align}
\mathcal{Y}_{L} & \equiv\left\{ y^{\prime}\in\underline{\mathcal{Y}}:X_{it}\beta-h_{t}^{-}\left(y^{\prime}\right)\leq x\beta-h_{t}^{-}\left(y\right)\right\} ,\label{eq:Y_low}\\
\mathcal{Y}_{U} & \equiv\left\{ y^{\prime}\in\underline{\mathcal{Y}}:X_{it}\beta-h_{t}^{-}\left(y^{\prime}\right)\geq x\beta-h_{t}^{-}\left(y\right)\right\} .\label{eq:Y_up}
\end{align}
The set $\mathcal{Y}_{L}$ collects the values $y^{\prime}\in\underline{\mathcal{Y}}$
for which the \emph{counterfactual index at time $t$,} $x\beta-h_{t}^{-}\left(y\right)$,
is greater than the \emph{observed index }at time $t$, $X_{it}\beta-h_{t}^{-}\left(y^{\prime}\right)$,
while $\mathcal{Y}_{U}$ collects the values $y^{\prime}\in\underline{\mathcal{Y}}$
for which the counterfactual index at time $t$ is smaller than the
observed index at time $t$. Note that $\mathcal{Y}_{L}\cup\mathcal{Y}_{U}=\underline{\mathcal{Y}}$.
\begin{thm}
\label{thm:insight_no_exogeneity}Let $Y_{it}$ follow (\ref{eq:potential_outcome_equation})
and let Assumptions \ref{assu:observability} and \ref{assu:weak_mono}
hold. Then, for given values of $\beta$ and $h_{t}$,
\begin{equation}
\tau_{t,x,y}\left(X_{i}\right)\in\left[\sup_{y^{\prime}\in\mathcal{Y}_{L}}P\left(\left.Y_{it}\geq y^{\prime}\right|X_{i}\right),\inf_{y^{\prime}\in\mathcal{Y}_{U}}P\left(\left.Y_{it}\geq y^{\prime}\right|X_{i}\right)\right]\cap\left[0,1\right],\label{eq:bounds_statement}
\end{equation}
using the convention that $\sup\emptyset=-\infty$ and $\inf\emptyset=+\infty$.
\end{thm}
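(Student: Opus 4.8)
The plan is to reduce both the counterfactual probability $\tau_{t,x,y}\left(X_{i}\right)$ and each observed probability $P\left(\left.Y_{it}\geq y^{\prime}\right|X_{i}\right)$ to tail probabilities of a single latent random variable, and then to exploit the monotonicity of a survival function in its threshold. The starting point is the generalized-inverse identity: under Assumption \ref{assu:weak_mono}, for any real $v$ and any $y\in\underline{\mathcal{Y}}$ one has the equivalence $h_{t}\left(v\right)\geq y\Leftrightarrow v\geq h_{t}^{-}\left(y\right)$. The direction $v\geq h_{t}^{-}\left(y\right)\Rightarrow h_{t}\left(v\right)\geq y$ is where right-continuity is essential: weak monotonicity makes $\left\{ y^{*}:h_{t}\left(y^{*}\right)\geq y\right\}$ an upper set, and right-continuity guarantees that its infimum is attained, i.e. $h_{t}\left(h_{t}^{-}\left(y\right)\right)\geq y$, so that monotonicity propagates the inequality upward from $h_{t}^{-}\left(y\right)$ to $v$. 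The reverse direction is immediate from the definition of the infimum. This equivalence is the only place where the structure of $h_{t}$ enters, and I would record it as a short preliminary step.

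Given the identity, write $W_{it}\equiv\alpha_{i}-U_{it}$. Applying the equivalence to $Y_{it}\left(x\right)=h_{t}\left(\alpha_{i}+x\beta-U_{it}\right)$ gives
\begin{equation}
\tau_{t,x,y}\left(X_{i}\right)=P\left(\left.W_{it}\geq h_{t}^{-}\left(y\right)-x\beta\right|X_{i}\right),
\end{equation}
and applying it to the observed outcome $Y_{it}=h_{t}\left(\alpha_{i}+X_{it}\beta-U_{it}\right)$ gives
\begin{equation}
P\left(\left.Y_{it}\geq y^{\prime}\right|X_{i}\right)=P\left(\left.W_{it}\geq h_{t}^{-}\left(y^{\prime}\right)-X_{it}\beta\right|X_{i}\right)
\end{equation}
for every $y^{\prime}\in\underline{\mathcal{Y}}$. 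Both quantities are now values of the same conditional survival function $s\mapsto P\left(\left.W_{it}\geq s\right|X_{i}\right)$, which is weakly decreasing in the threshold $s$.

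The comparison then follows by matching thresholds to the sets $\mathcal{Y}_{L}$ and $\mathcal{Y}_{U}$. For $y^{\prime}\in\mathcal{Y}_{U}$ the defining inequality $X_{it}\beta-h_{t}^{-}\left(y^{\prime}\right)\geq x\beta-h_{t}^{-}\left(y\right)$ rearranges to $h_{t}^{-}\left(y^{\prime}\right)-X_{it}\beta\leq h_{t}^{-}\left(y\right)-x\beta$; since the survival function is decreasing in its threshold, this yields $P\left(\left.Y_{it}\geq y^{\prime}\right|X_{i}\right)\geq\tau_{t,x,y}\left(X_{i}\right)$, and taking the infimum over $\mathcal{Y}_{U}$ produces the upper bound. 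Symmetrically, for $y^{\prime}\in\mathcal{Y}_{L}$ the defining inequality reverses the threshold comparison and yields $P\left(\left.Y_{it}\geq y^{\prime}\right|X_{i}\right)\leq\tau_{t,x,y}\left(X_{i}\right)$, so the supremum over $\mathcal{Y}_{L}$ is a lower bound. The intersection with $\left[0,1\right]$ is automatic because $\tau_{t,x,y}$ is a probability, and the conventions $\sup\emptyset=-\infty$, $\inf\emptyset=+\infty$ render the corresponding bound vacuous exactly when one of the two index sets is empty, in which case $\left[0,1\right]$ controls. The main obstacle is the careful treatment of the generalized inverse in the first step; once the equivalence $h_{t}\left(v\right)\geq y\Leftrightarrow v\geq h_{t}^{-}\left(y\right)$ is established, the remainder is a direct monotonicity argument that requires no distributional assumptions on $U_{it}$.
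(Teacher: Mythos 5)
Your proposal is correct and follows essentially the same route as the paper's proof: reduce both $\tau_{t,x,y}\left(X_{i}\right)$ and each $P\left(\left.Y_{it}\geq y^{\prime}\right|X_{i}\right)$ to tail probabilities of the single latent variable $\alpha_{i}-U_{it}$ via the equivalence $h_{t}\left(v\right)\geq y\Leftrightarrow v\geq h_{t}^{-}\left(y\right)$, then compare thresholds using monotonicity of the conditional distribution function. The only difference is one of detail: the paper asserts that equivalence directly from Assumption \ref{assu:weak_mono}, whereas you spell out why right-continuity yields $h_{t}\left(h_{t}^{-}\left(y\right)\right)\geq y$, which is a worthwhile clarification but not a different argument.
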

Theorem \ref{thm:insight_no_exogeneity} uses the linear-index structure
of (\ref{eq:potential_outcome_equation}) and knowledge of $\beta$
and $h_{t}$ to classify $P\left(\left.Y_{it}\geq y^{\prime}\right|X_{i}\right)$
as a lower (upper) bound on $\tau_{t,x,y}\left(X_{i}\right)$ for
any value of $y^{\prime}\in\underline{\mathcal{Y}}$. By varying $y^{\prime}\in\underline{\mathcal{Y}},$
we obtain the sets $\mathcal{Y}_{L}$ ($\mathcal{Y}_{U})$ of values
$y^{\prime}$ that provide lower (upper) bounds. Equation (\ref{eq:bounds_statement})
intersects these bounds. Since $\mathcal{Y}_{L}\cup\mathcal{Y}_{U}=\underline{\mathcal{Y}}$,
every value of $y^{\prime}$ provides an upper bound, a lower bound,
or both. Note that since either $\mathcal{Y}_{L}$ or $\mathcal{Y}_{U}$
can be empty, the trivial lower (upper) bound is selected by the intersection
with $\left[0,1\right]$.

Section \ref{sec:Examples} applies Theorem \ref{thm:insight_no_exogeneity}
to binary choice, ordered choice, and censored regression. In the
case of binary choice, the bounds of Theorem \ref{thm:insight_no_exogeneity}
simplify significantly. In that case, $P\left(\left.Y_{it}\geq1\right|X_{i}\right)$
provides an upper bound for $P\left(\left.Y_{it}\left(x\right)\geq1\right|X_{i}\right)$
if $X_{it}\beta>x\beta$, and a lower bound if $X_{it}\beta<x\beta$.
When $X_{it}\beta=x\beta$, $P\left(\left.Y_{it}\geq1\right|X_{i}\right)$
point-identifies $P\left(\left.Y_{it}\left(x\right)\geq1\right|X_{i}\right)$.
\begin{rem}
The result in Theorem \ref{thm:insight_no_exogeneity} is stated for
a given value of $\left(\beta,h_{t}\right)$. Thus, the bounds are
directly applicable in case where $\left(\beta,h_{t}\right)$ are
point-identified. If $\left(\beta,h_{t}\right)$ are partially identified,
bounds on $\tau_{t,x,y}\left(X_{i}\right)$ can be obtained by taking
the worst-case bounds across parameter values in the identified set.
\end{rem}
\medskip{}

\begin{rem}
The bounds in Theorem \ref{thm:insight_no_exogeneity} do not require
any additional exogeneity conditions on $\left(X_{i},U_{i}\right)$
beyond those that may be needed for Assumption (\ref{assu:observability})(ii).
In particular, the bounds are valid for models with strictly or weakly
exogenous regressors, with lagged dependent variables, and with endogenous
regressors. Likewise, the bounds do not require any time-stationarity
assumptions on the distribution of the error terms. In this sense,
the bounds are valid for models with errors with time-varying distributions.
Finally, separability in $\alpha_{i}$ and $U_{it}$ is not necessary
for identification of the counterfactual survival probability, suggesting
that our argument can be applied to a more general class of models.
\end{rem}
\medskip{}

\begin{rem}
\label{rem:Point-identification-happens}Point identification occurs
in a number of settings. First, when $h_{t}$ is invertible, see Section
(\ref{subsec:invertible_h}). Second, when there exists a $y^{\prime}$
such that
\[
h_{t}^{-}\left(y^{\prime}\right)=\left(X_{it}-x\right)\beta+h_{t}^{-}\left(y\right).
\]
This can happen, among others, when $y=y^{\prime}$ and $X_{it}=x$
(or $X_{it}\beta=x\beta$). In this case, $y\in\mathcal{Y}_{L}\cap\mathcal{Y}_{U}$
and the upper and lower bounds coincide.
\end{rem}
\medskip{}

\begin{rem}
\label{rem:cardinality}The bounds in Theorem \ref{thm:insight_no_exogeneity}
shrink with the cardinality of $\underline{\mathcal{Y}}$, so that
the worst case for our bounds is when the outcomes are binary (worst
case here means that, provided failure of point-identification, one
of the bounds is always trivial), while the best case is when the
outcomes are continuous (best case in the sense that the parameter
of interest is always point-identified).
\end{rem}
\medskip{}

\subsection{Conditional time-stationarity\label{subsec:Conditional-time-stationarity}}

The bounds in Theorem \ref{thm:insight_no_exogeneity} can be wide,
for example when the dependent variable has few points of support
(see Remark \ref{rem:cardinality}). The following assumption obtains
tighter bounds by using information across all time periods rather
than information from period $t$ only.
\begin{assumption}
\label{assu:Strict-exogeneity}Conditional time-stationarity: $\left.U_{it}\right|\alpha_{i},X_{i}\stackrel{d}{=}\left.U_{i1}\right|\alpha_{i},X_{i}$,
for all $t=2,\dots,T$.
\end{assumption}
Conditional time-stationarity requires that the conditional distribution
of the error terms conditional on $\alpha_{i},X_{i}$ be the same
in each time period. In \citet{chernozhukov2013average,chernozhukov2015nonparametric}
and \citet{hoderlein2012nonparametric}\footnote{See, e.g., \citet{manski1987semiparametric}, \citet{Honore1992},
\citet{Abrevaya2000}, \citet{hoderlein2012nonparametric}, \citet{GrahamPowell2012},
\citet{chernozhukov2013average}, \citet{chernozhukov2015nonparametric},
\citet{KhanPonomarevaTamer2016}, \citet{ChenKhanTang2019}, \citet{KhanPonomarevaTamer2021}.} the authors refer to this assumption as both ``strict exogeneity''
and time-homogeneity of the error terms.\footnote{For a discussion of strict exogeneity, as well as other notions of
exogeneity, in the context of linear models, see \citet{Chamberlain1984},
\citet{ArellanoHonore2001}, \citet{ArellanoBonhomme2011}.} Note that the error terms are required to have a time-stationary
(``time-homogeneous'') distribution conditional on $\alpha_{i}$
and the entire sequence $X_{i1},\dots,X_{iT}$. The assumption allows
for serial correlation in the errors $U_{it}$ and in some components
of $X_{i}$, and it leaves the distribution of $\alpha_{i}$ conditional
on $X_{i}$ unrestricted. Note that, in our set-up with time-varying
structural equation, the conditional distribution of $Y_{it}$ given
$X_{i}$ can still vary over time.

Fix $t,x,$$y\in\underline{\mathcal{Y}}$ and define the sets:
\begin{align*}
\mathcal{L} & \equiv\left\{ \left(s,y^{\prime}\right)\in\left\{ 1,\cdots,T\right\} \times\mathcal{\underline{\mathcal{Y}}}:X_{is}\beta-h_{s}^{-}\left(y^{\prime}\right)\leq x\beta-h_{t}^{-}\left(y\right)\right\} ,\\
\mathcal{U} & \equiv\left\{ \left(s,y^{\prime}\right)\in\left\{ 1,\cdots,T\right\} \times\underline{\mathcal{Y}}:X_{is}\beta-h_{s}^{-}\left(y^{\prime}\right)\geq x\beta-h_{t}^{-}\left(y\right)\right\} .
\end{align*}

\begin{thm}
\label{thm:intersection_bound_strict_exogeneity}Let $Y_{i1},\cdots,Y_{iT}$
follow (\ref{eq:potential_outcome_equation}), and let Assumptions
(\ref{assu:observability}), \ref{assu:weak_mono}, and \ref{assu:Strict-exogeneity}
hold. Then, for given values of $\beta$ and $h_{t}$,
\[
\tau_{t,x,y}\left(X_{i}\right)\in\left[\sup_{\left(s,y^{\prime}\right)\in\mathcal{L}}P\left(\left.Y_{is}\geq y^{\prime}\right|X_{i}\right),\inf_{\left(s,y^{\prime}\right)\in\mathcal{U}}P\left(\left.Y_{is}\geq y^{\prime}\right|X_{i}\right)\right]\cap\left[0,1\right],
\]
using the convention that $\sup\emptyset=-\infty$ and $\inf\emptyset=+\infty$.
\end{thm}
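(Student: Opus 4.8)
The plan is to mirror the argument behind Theorem \ref{thm:insight_no_exogeneity}, but to invoke Assumption \ref{assu:Strict-exogeneity} so that observed probabilities from \emph{every} period $s$---not just period $t$---can be compared to $\tau_{t,x,y}\left(X_{i}\right)$. The engine of the proof is the equivalence, valid for weakly-monotone right-continuous $h_{s}$ (Assumption \ref{assu:weak_mono}) and its generalized inverse \eqref{eq:generalizedinverse},
\[
h_{s}\left(v\right)\geq y'\iff v\geq h_{s}^{-}\left(y'\right),
\]
which is the usual Galois-type relationship between a monotone map and its inverse and which already underlies Theorem \ref{thm:insight_no_exogeneity}; right-continuity is what guarantees $h_{s}\left(h_{s}^{-}\left(y'\right)\right)\geq y'$ and hence the ``$\Leftarrow$'' direction.

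First I would rewrite both the target parameter and the observed probabilities as survival probabilities of a common latent index. Applying the equivalence to \eqref{eq:potential_outcome_equation} gives
\[
\tau_{t,x,y}\left(X_{i}\right)=P\left(\left.\alpha_{i}-U_{it}\geq h_{t}^{-}\left(y\right)-x\beta\right|X_{i}\right),
\]
and, for any $\left(s,y'\right)$,
\[
P\left(\left.Y_{is}\geq y'\right|X_{i}\right)=P\left(\left.\alpha_{i}-U_{is}\geq h_{s}^{-}\left(y'\right)-X_{is}\beta\right|X_{i}\right).
\]
Writing $V_{is}\equiv\alpha_{i}-U_{is}$, the next step---the crux relative to Theorem \ref{thm:insight_no_exogeneity}---is to show that the conditional survival function $c\mapsto P\left(\left.V_{is}\geq c\right|X_{i}\right)$ does not depend on $s$. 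This follows by iterated expectations: for any real $c$,
\[
P\left(\left.V_{is}\geq c\right|X_{i}\right)=\mathbb{E}\left[\left.P\left(\left.U_{is}\leq\alpha_{i}-c\right|\alpha_{i},X_{i}\right)\right|X_{i}\right],
\]
and Assumption \ref{assu:Strict-exogeneity} renders the inner conditional probability invariant to $s$, so the whole expression equals $P\left(\left.V_{i1}\geq c\right|X_{i}\right)$. Denote this common conditional survival function by $G\left(\left.c\right|X_{i}\right)$; it is weakly decreasing in $c$.

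With a single decreasing function $G\left(\left.\cdot\right|X_{i}\right)$ governing all periods, the classification is immediate. We have $\tau_{t,x,y}\left(X_{i}\right)=G\left(\left.h_{t}^{-}\left(y\right)-x\beta\right|X_{i}\right)$ and $P\left(\left.Y_{is}\geq y'\right|X_{i}\right)=G\left(\left.h_{s}^{-}\left(y'\right)-X_{is}\beta\right|X_{i}\right)$. If $\left(s,y'\right)\in\mathcal{U}$, then $h_{s}^{-}\left(y'\right)-X_{is}\beta\leq h_{t}^{-}\left(y\right)-x\beta$, so monotonicity of $G$ yields $P\left(\left.Y_{is}\geq y'\right|X_{i}\right)\geq\tau_{t,x,y}\left(X_{i}\right)$; symmetrically, $\left(s,y'\right)\in\mathcal{L}$ delivers a lower bound. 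Taking the supremum over $\mathcal{L}$ and the infimum over $\mathcal{U}$, then intersecting with $\left[0,1\right]$ (which absorbs the empty-set conventions and the fact that $\tau$ is a probability), yields the stated interval. The main obstacle is precisely the $s$-invariance of $G\left(\left.\cdot\right|X_{i}\right)$: it is essential that Assumption \ref{assu:Strict-exogeneity} conditions on $\alpha_{i}$ \emph{and} the full sequence $X_{i}$, since the iterated-expectation step integrates the $s$-free inner probability against the conditional law of $\alpha_{i}$ given $X_{i}$, which must itself not depend on $s$; any weaker stationarity would break the cross-period comparison and return us to the period-$t$-only bounds of Theorem \ref{thm:insight_no_exogeneity}.
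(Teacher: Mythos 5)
Your proof is correct and takes essentially the same route as the paper's: both arguments reduce the problem to the $s$-invariance of the conditional law of the composite error ($U_{is}-\alpha_{i}$, equivalently your $V_{is}$) given $X_{i}$, and then classify each observed probability $P\left(\left.Y_{is}\geq y^{\prime}\right|X_{i}\right)$ as an upper or lower bound on $\tau_{t,x,y}\left(X_{i}\right)$ by comparing indices through the generalized inverse and the monotonicity of the common conditional distribution function. The only difference is presentational: you derive the $s$-invariance directly by iterated expectations from Assumption \ref{assu:Strict-exogeneity}, whereas the paper obtains it from the equivalence $\left.\alpha_{i},U_{it}\right|X_{i}\stackrel{d}{=}\left.\alpha_{i},U_{i1}\right|X_{i}$ cited from \citet{chernozhukov2015nonparametric}, so your version is merely more self-contained.
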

As in the case of in Theorem \ref{thm:insight_no_exogeneity}, the
result above applies directly to point-identified $\beta$ and $h_{t}$.
If $\left(\beta,h_{t}\right)$ are partially identified, bounds on
$\tau_{t,x,y}\left(X_{i}\right)$ can be obtained by taking the worst-case
bounds across parameter values in the identified set. 

According to Theorem \ref{thm:intersection_bound_strict_exogeneity},
we can use information from any period $s$ to construct bounds on
the counterfactual survival distribution in period $t$. The resulting
bounds can be much more informative than those in Theorem \ref{thm:insight_no_exogeneity}
without time-stationarity and exogeneity assumptions; this can be
seen from the expressions for binary and ordered choice in Section
\ref{sec:Examples}, and from the numerical experiments in Section
\ref{subsec:Numerical-example}. The gains can be substantial, especially
if the number of time periods is large, if there is variation in the
values of the sequence $X_{i}$, and if there is a large degree of
variation over time in $h_{t}$.\footnote{Although, we do not prove sharpness of the bounds in Theorem \ref{thm:intersection_bound_strict_exogeneity},
we were unable to construct a situation where they were not. As we
show in the numerical exercises, the bounds are (very) informative.}
\begin{rem}
That variation in $h_{t}$ can improve identification is particularly
interesting. In related settings, time-homogeneity of the outcome
equation, i.e. $h_{t}=h$, has been used for identification of partial
effects in panel models. In our setting, time-variation in the structural
function can aid identification of partial effects in monotone single-index
models. For an example, see our analysis of binary choice models in
Section \ref{sec:Panel-binary-choice}.
\end{rem}
\medskip{}

\begin{rem}
\label{rem:alternative_strict_exo_bound_formulation}The best upper
and lower bounds can be thought of as, first, choosing the best $y^{\prime}$
for each time period (as in Theorem \ref{thm:insight_no_exogeneity})
and then choosing the best time period. Letting 
\begin{align*}
\mathcal{Y}_{sL} & \equiv\left\{ y^{\prime}\in\mathcal{\underline{\mathcal{Y}}}:X_{is}\beta-h_{s}^{-}\left(y^{\prime}\right)\leq x\beta-h_{t}^{-}\left(y\right)\right\} ,\\
\mathcal{Y}_{sU} & \equiv\left\{ y^{\prime}\in\underline{\mathcal{Y}}:X_{is}\beta-h_{s}^{-}\left(y^{\prime}\right)\geq x\beta-h_{t}^{-}\left(y\right)\right\} ,
\end{align*}
we obtain a bound for each time period,
\[
\tau_{t,x,y}\left(X_{i}\right)\in\left[\sup_{y^{\prime}\in\mathcal{Y}_{sL}}P\left(\left.Y_{is}\geq y^{\prime}\right|X_{i}\right),\inf_{y^{\prime}\in\mathcal{Y}_{sU}}P\left(\left.Y_{is}\geq y^{\prime}\right|X_{i}\right)\right]\cap\left[0,1\right]
\]
and, because this bound applies for each period $s$, we obtain 
\[
\tau_{t,x,y}\left(X_{i}\right)\in\cap_{s\in\left\{ 1,\cdots,T\right\} }\left[\sup_{y^{\prime}\in\mathcal{Y}_{sL}}P\left(\left.Y_{is}\geq y^{\prime}\right|X_{i}\right),\inf_{y^{\prime}\in\mathcal{Y}_{sU}}P\left(\left.Y_{is}\geq y^{\prime}\right|X_{i}\right)\right]\cap\left[0,1\right]
\]
or 
\[
\max_{s\in\left\{ 1,\cdots,T\right\} }\sup_{y^{\prime}\in\mathcal{Y}_{sL}}P\left(\left.Y_{is}\geq y^{\prime}\right|X_{i}\right)\leq\tau_{t,x,y}\left(X_{i}\right)\leq\min_{s\in\left\{ 1,\cdots,T\right\} }\inf_{y^{\prime}\in\mathcal{Y}_{sU}}P\left(\left.Y_{is}\geq y^{\prime}\right|X_{i}\right).
\]

We illustrate this for the binary and ordered choice models in Section
\ref{sec:Examples}.
\end{rem}
\medskip{}

\begin{rem}
As in the case of in Theorem \ref{thm:insight_no_exogeneity}, the
bounds in Theorem \ref{thm:intersection_bound_strict_exogeneity}
do not require separability in $\alpha_{i}$ and $U_{it}$ for identification
of the counterfactual survival probability. Thus, our argument may
be applied to a more general class of models.
\end{rem}
\medskip{}

\begin{rem}
Theorem \ref{thm:intersection_bound_strict_exogeneity} updates and
replaces the results in Section 3.2 in Botosaru and Muris (2017).
\end{rem}

\section{Examples\label{sec:Examples}}

In this section, we apply our results to a number of empirically relevant
choices for $h_{t}$. These examples are helpful in understanding
how informative our bounds are, and help relate them to existing bounds
in the literature derived under related but different conditions.

Let $v\equiv\alpha_{i}+X_{it}\beta-U_{it}$. We start by applying
our results to the fixed effects linear transformation model with
invertible $h_{t}$. We then study binary choice models with two-way
fixed effects $h_{t}\left(v\right)=1\left\{ v\geq\lambda_{t}\right\} $,
ordered choice models with time-varying thresholds $h_{t}\left(v\right)=\sum_{j=1}^{J}1\left\{ v\geq\lambda_{jt}\right\} $,
$J\in\mathbb{N},$ and censored regression with $h_{t}\left(v\right)=\max\left\{ \lambda_{t},v\right\} $.

\subsection{Continuous outcomes and invertible $h_{t}$\label{subsec:invertible_h}}

Let $h_{t}$ be invertible, so that $h_{t}^{-}=h_{t}^{-1}$. Examples
include linear regression with two-way fixed effects, i.e. $h_{t}\left(v\right)=v+\lambda_{t}$;
transformation models used in duration analysis; and the Box-Cox transformation
model.\footnote{Such models have been studied extensively in the cross-sectional setting,
see, e.g., \citet{amemiyaComparisonBoxCoxMaximum1981,barnettNonparametricSemiparametricMethods1991,powellRescaledMethodsofmomentsEstimation1996}.}

Let $h_{t}$ be defined on $\mathbb{R},$or $v\in\mathbb{R}.$ There
always exists a $y^{\prime}$ such that
\begin{align*}
y^{\prime} & =h_{t}\left(h_{t}^{-1}\left(y\right)+\left(x-X_{it}\right)\beta\right)\\
 & \in\mathcal{Y}_{L}\cap\mathcal{Y}_{U}.
\end{align*}
Theorem \ref{thm:insight_no_exogeneity} applies and the counterfactual
survival probability (\ref{eq:parameter_of_interest}) is point-identified.
To see this, consider the argument below:\footnote{See also Remark 2 in Botosaru and Muris (2017), and Botosaru et al.
(2021).}
\begin{align*}
\tau_{t,x,y}\left(X_{i}\right) & =P\left(\left.Y_{it}\left(x\right)\geq y\right|X_{i}\right)\\
 & =P\left(\left.\alpha_{i}-U_{it}\geq h_{t}^{-1}\left(y\right)-x\beta\right|X_{i}\right)\\
 & =P\left(\left.\alpha_{i}+X_{it}\beta-U_{it}\geq h_{t}^{-1}\left(y\right)+\left(X_{it}-x\right)\beta\right|X_{i}\right)\\
 & =P\left(\left.Y_{it}\geq h_{t}\left(h_{t}^{-1}\left(y\right)+\left(X_{it}-x\right)\beta\right)\right|X_{i}\right)\\
 & =P\left(\left.Y_{it}\geq y^{\prime}\right|X_{i}\right),
\end{align*}
where invertibility of $h_{t}$ was used in the second equality.

We are not aware of results for $\tau_{t,x,y}$ -- or derived quantities
such as partial/marginal effects -- for the case considered here,
other than those in \citet{BotosaruMuris2017,BotosaruMurisPendakur2021,BotosaruMurisSokullu2022}.

\subsection{Binary choice\label{sec:Panel-binary-choice}}

The link function 
\[
h_{t}\left(v\right)=1\left\{ v-\lambda_{t}\geq0\right\} 
\]
obtains the panel binary choice model with two-way fixed effects with
structural function
\begin{equation}
Y_{it}\left(x\right)=1\left\{ \alpha_{i}+x\beta-U_{it}-\lambda_{t}\geq0\right\} .\label{eq:BC_structuralfct}
\end{equation}

Our results yield bounds on (\ref{eq:parameter_of_interest}), hence
on partial effects, without parametric assumptions on the distribution
of the error terms and for a variety of exogeneity conditions. The
only existing results without parametric assumptions on the error
term that we are aware of are those in \citet{chernozhukov2013average},
but those require that regressors be discrete and that $\lambda_{t}=0$
for all $t$.

In what follows, we fix the time period for the counterfactual to
$t=1$ and use the abbreviated notation 
\begin{equation}
\tau_{x}\left(X_{i}\right)\equiv\tau_{1,x,1}\left(X_{i}\right)=P\left(\left.Y_{i1}\left(x\right)\geq1\right|X_{i}\right).\label{eq:BC_survival}
\end{equation}
For this model, $\underline{\mathcal{Y}}=\left\{ 1\right\} $ and
$h_{t}^{-}\left(1\right)=\lambda_{t}$.
\begin{rem}
\label{rem:no_inf}By removing $\text{inf\ensuremath{\mathcal{Y}=0}}$,
we remove $h_{t}^{-}\left(0\right)=-\infty$. This is useful, since,
e.g., Theorem \ref{thm:insight_no_exogeneity} asks us to compare
$X_{it}\beta-h_{t}^{-}\left(y^{\prime}\right)$ to $x\beta-h_{t}^{-}\left(y\right)$
for all values of $y^{\prime}$ and fixed $x,y,X_{i}$. For the particular
case that $y^{\prime}=\text{inf}\ensuremath{\mathcal{Y}=0}=y$, this
obtains $X_{it}\beta+\infty$ and $x\beta+\infty$. Note that we could
allow $y^{\prime}\in\mathcal{Y}$ for fixed $y\in\mathcal{\underline{Y}}$,
but this would lead to obtaining trivial bounds, i.e. for $y=1$ and
$y^{\prime}=0$, $\text{\ensuremath{\tau_{t,x,1}\left(X_{i}\right)}}\leq P\left(\left.Y_{it}\geq0\right|X_{i}\right)=1$.
Hence, we restrict both $y,y^{\prime}\in\underline{\mathcal{Y}}$.
\end{rem}

\subsubsection{Theorem 1 bounds}

For $y^{\prime}=1=y,$ Theorem \ref{thm:insight_no_exogeneity} leads
to three different cases depending on the sign of $X_{i1}\beta-x\beta$.
These cases are:
\begin{enumerate}
\item $X_{i1}\beta=x\beta$, in which case (\ref{eq:BC_survival}) is point-identified;
\item $X_{i1}\beta>x\beta$, in which case the observed index exceeds the
counterfactual one, so that $y^{\prime}=1\in\mathcal{Y}_{U}$ and
the observed probability provides an upper bound for (\ref{eq:BC_survival});
\item $X_{i1}\beta<x\beta$, in which case the observed index is lower than
the counterfactual one, so that $y^{\prime}=1\in\mathcal{Y}_{L}$
and the observed probability provides a lower bound for (\ref{eq:BC_survival}).
\end{enumerate}
The results of Theorem \ref{thm:insight_no_exogeneity} can be summarized
as follows:
\[
\tau_{x}\left(X_{i}\right)\in\begin{cases}
\left[0,\text{min}\left\{ 1,P\left(\left.Y_{i1}\geq1\right|X_{i}\right)\right\} \right], & \text{if }X_{i1}\beta>x\beta,\\
\left\{ P\left(\left.Y_{i1}\geq1\right|X_{i}\right)\right\} , & \text{if }X_{i1}\beta=x\beta,\\
\left[\text{max}\left\{ 0,P\left(\left.Y_{i1}\geq1\right|X_{i}\right)\right\} ,1\right], & \text{if }X_{i1}\beta<x\beta.
\end{cases}
\]

To see that these bounds are valid, we adapt the key derivation underlying
Theorem \ref{thm:insight_no_exogeneity} to this specific case:
\begin{align}
\tau_{x}\left(X_{i}\right) & =P\left(\left.Y_{i1}\left(x\right)\geq1\right|X_{i}\right)\nonumber \\
 & =P\left(\left.\alpha_{i}+x\beta-U_{i1}\geq\lambda_{1}\right|X_{i}\right)\nonumber \\
 & \lesseqqgtr P\left(\left.\alpha_{i}+X_{i1}\beta-U_{i1}\geq\lambda_{1}\right|X_{i}\right)\nonumber \\
 & =P\left(\left.Y_{i1}\geq1\right|X_{i}\right),\label{eq:bc_point_identified}
\end{align}
where the third line denotes that the direction of the inequality
depends on the sign of the difference between the observed index $X_{i1}\beta$
and the counterfactual index $x\beta$.

\subsubsection{Theorem 2 bounds}

Under Assumption \ref{assu:Strict-exogeneity}, Theorem \ref{thm:intersection_bound_strict_exogeneity}
implies that any period can be used to construct counterfactuals for
period $1$. Instead of stating the bounds as implied by Theorem \ref{thm:intersection_bound_strict_exogeneity},
we show how to construct them from first principles.

Suppose that there exists a time period $s$ such that 
\begin{equation}
X_{is}\beta-\lambda_{s}=x\beta-\lambda_{1}.\label{eq:equality}
\end{equation}
Then (\ref{eq:BC_survival}) is point identified with
\begin{align*}
\tau_{x}\left(X_{i}\right) & =P\left(\left.\alpha_{i}+x\beta-\lambda_{1}-U_{i1}\geq0\right|X_{i}\right)\\
 & =P\left(\left.\alpha_{i}+x\beta-\lambda_{1}-U_{is}\geq0\right|X_{i}\right)\\
 & =P\left(\left.\alpha_{i}+X_{is}\beta-\lambda_{s}-U_{is}\geq0\right|X_{i}\right)\\
 & =P\left(\left.Y_{is}\geq1\right|X_{i}\right),
\end{align*}
where the second equality follows by Assumption (\ref{assu:Strict-exogeneity})
and the third equality follows by (\ref{eq:equality}).\footnote{As a special case, in a model without time dummies and with a binary
treatment indicator, e.g., $X_{it}\in\left\{ 0,1\right\} $, we can
point-identify the distribution under treatment $x=1$ for any subpopulation
that is treated at some point, $\exists t:X_{it}=1$.}

If there does not exist a time period such that (\ref{eq:equality})
holds, Theorem \ref{thm:intersection_bound_strict_exogeneity} can
be operationalized as follows. Fix $y^{\prime}=1$ and, for each period
$s\in\left\{ 1,\dots,T\right\} $, compare $X_{is}\beta-\lambda_{s}$
to $x\beta-\lambda_{1}$, and group the time periods according to
whether they provide an upper bound $\left(s\in\mathcal{T}_{U}\right)$
or a lower bound $\left(s\in\mathcal{T}_{L}\right)$:
\begin{align*}
\mathcal{T}_{U} & \equiv\left\{ s\in\left\{ 1,\cdots,T\right\} :X_{is}\beta-\lambda_{s}\geq x\beta-\lambda_{1}\right\} ,\\
\mathcal{T}_{L} & \equiv\left\{ s\in\left\{ 1,\cdots,T\right\} :X_{is}\beta-\lambda_{s}\leq x\beta-\lambda_{1}\right\} .
\end{align*}
These sets correspond to $\mathcal{U}$ and $\mathcal{L}$ in Theorem
\ref{thm:intersection_bound_strict_exogeneity} with $y^{\prime}=1=y$,
counterfactual period $t=1$, and $h_{t}^{-}\left(1\right)=\lambda_{t}.$

The best lower (upper) bound on (\ref{eq:BC_survival}) is constructed
using periods in $\mathcal{T}_{L}$ ($\mathcal{T}_{U}$):
\begin{align}
\max\left\{ 0,\left(P\left(\left.Y_{is}\geq1\right|X_{i}\right)\right)_{s\in\mathcal{T}_{L}}\right\}  & \leq\tau_{x}\left(X_{i}\right)\label{eq:bounds_strict_exo_bc}\\
 & \leq\min\left\{ 1,\left(P\left(\left.Y_{is}\geq1\right|X_{i}\right)\right)_{s\in\mathcal{T}_{U}}\right\} ,
\end{align}
or
\[
\tau_{x}\left(X_{i}\right)\in\begin{cases}
\left[0,\text{min}\left\{ 1,\left\{ P\left(\left.Y_{is}\geq1\right|X_{i}\right)\right\} _{s\in\mathcal{T}_{U}}\right\} \right], & \text{if }X_{is}\beta-\lambda_{s}>x\beta-\lambda_{1},\\
\left\{ P\left(\left.Y_{is}\geq1\right|X_{i}\right)\right\} , & \text{if }X_{is}\beta-\lambda_{s}=x\beta-\lambda_{1},\\
\left[\text{max}\left\{ 0,\left\{ P\left(\left.Y_{is}\geq1\right|X_{i}\right)\right\} _{s\in\mathcal{T}_{L}}\right\} ,1\right], & \text{if }X_{is}\beta-\lambda_{s}<x\beta-\lambda_{1}.
\end{cases}
\]

These bounds have a few interesting properties. First, they can be
informative for stayers, i.e. even when $X_{it}=x$ for all $t$,
nontrivial bounds can be derived as long as there are time effects
$\lambda_{t}\neq\lambda_{1}$ for some $t$. Second, under Assumption
\ref{assu:Strict-exogeneity}, the bounds tighten as compared to the
bounds without this assumption. In Section \ref{subsec:Numerical-example},
we investigate these and other properties through a numerical experiment.

\subsection{Ordered choice\label{subsec:Example:-ordered-choice}}

The link function 
\[
h_{t}\left(v\right)=\sum_{j=1}^{J}1\left\{ v\geq\lambda_{jt}\right\} ,J\in\mathbb{N},
\]
obtains an ordered choice model with fixed effects and time-varying
thresholds:\footnote{\citet{dasPanelDataModel1999,johnsonphdthesis,baetschmannIdentificationEstimationThresholds2012,baetschmann2012identification,muris2017estimation,BotosaruMurisPendakur2021}
discuss identification of the $\left(\beta,\lambda_{tj}\right)$ under
various conditions. With logistic errors, the parameters can be estimated
via composite conditional maximum likelihood estimation. Without logistic
errors, the parameters can be estimated using maximum score methods.
In the logistic case, the results in \citet{davezies2021identification}
can then be used to bound average marginal effects.}
\begin{equation}
h_{t}\left(v\right)=\begin{cases}
1 & \text{ if }-\infty<v<\lambda_{t2},\\
2 & \text{ if }\lambda_{t2}\leq v<\lambda_{t3},\\
\vdots\\
J & \text{ if }\lambda_{tJ}\leq v<+\infty.
\end{cases}\label{eq:ordered_h}
\end{equation}

For this model, $\mathcal{\underline{Y}}=\left\{ 2,\dots,J\right\} $
and $h_{t}^{-}\left(y\right)=\lambda_{ty},\,y\in\underline{\mathcal{Y}}.$

We fix the time period for the counterfactual to $t=1$. The parameter
of interest is then 
\begin{equation}
\tau_{1,x,y}\left(X_{i}\right)=P\left(\left.Y_{i1}\left(x\right)\geq y\right|X_{i}\right).\label{eq:OC_survival}
\end{equation}

\subsubsection{Theorem 1 bounds}

Fix $\left(y,x,X_{i}\right)$. To bound $\tau_{1,x,y}\left(X_{i}\right)$
in (\ref{eq:OC_survival}), for each $y^{\prime}\in\mathcal{\underline{Y}}$
we compare the observed index $X_{i1}\beta-\lambda_{1y^{\prime}}$
to the counterfactual index $x\beta-\lambda_{1y}$, and construct
the sets 
\begin{align*}
\mathcal{Y}_{U} & =\left\{ y^{\prime}\in\mathcal{\underline{Y}}:X_{i1}\beta-\lambda_{1y^{\prime}}\geq x\beta-\lambda_{1y}\right\} ,\\
\mathcal{Y}_{L} & =\left\{ y^{\prime}\in\mathcal{\underline{Y}}:X_{i1}\beta-\lambda_{1y^{\prime}}\leq x\beta-\lambda_{1y}\right\} .
\end{align*}
These sets can be used to form lower/upper bounds on $\tau_{1,x,y}\left(X_{i}\right)$
according to Theorem \ref{thm:insight_no_exogeneity}.

Note that for the ordered choice model, for a fixed $y$, there may
be multiple $y^{'}$s yielding nontrivial bounds. This is different
from the binary choice case where $\underline{\mathcal{Y}}=\left\{ 1\right\} $
has one element, that provides either a non-trivial lower bound \emph{or}
a non-trivial upper bound. For the ordered choice model, $\underline{\mathcal{Y}}$
has at least two elements, so there may be nontrivial upper \emph{and}
lower bounds. To see this, let $y^{\prime}=y$ (so that $\lambda_{1y}=\lambda_{1y^{\prime}}$)
and assume $X_{i1}\beta\geq x\beta$. Then $y^{\prime}\in\mathcal{Y}_{U}$,
so that the observed probability $P\left(\left.Y_{i1}\geq y^{\prime}\right|X_{i}\right)$
provides a nontrivial upper bound:
\[
P\left(\left.Y_{i1}\geq y^{\prime}\right|X_{i}\right)=P\left(\left.Y_{i1}\geq y\right|X_{i}\right)\geq\tau_{1,x,y}\left(X_{i}\right).
\]
However, there \emph{may} be other values in $\underline{\mathcal{Y}}$,
call them $y^{\prime},$ for which $y^{\prime}\in\mathcal{Y}_{L}$.
This would happen if, e.g., $\lambda_{1y}$ and $\lambda_{1y^{\prime}}$
are such that $X_{i1}\beta-\lambda_{1y^{\prime}}\leq x\beta-\lambda_{1y}$.
In this case, the observed probability $P\left(\left.Y_{i1}\geq y^{\prime}\right|X_{i}\right)$
is a nontrivial lower bound:
\[
P\left(\left.Y_{i1}\geq y^{\prime}\right|X_{i}\right)\leq\tau_{1,x,y}\left(X_{i}\right).
\]

The bounds given by Theorem \ref{thm:insight_no_exogeneity} are the
best bounds across $\mathcal{Y}_{U}$ and $\mathcal{Y}_{L}$, i.e.
\begin{equation}
P\left(\left.Y_{i1}\left(x\right)\geq y\right|X_{i}\right)\in\left[\text{max}_{y^{\prime}\in\mathcal{Y}_{L}}\left\{ P\left(\left.Y_{i1}\geq y^{\prime}\right|X_{i}\right)\right\} ,\text{min}_{y^{\prime}\in\mathcal{Y}_{U}}\left\{ P\left(\left.Y_{i1}\geq y^{\prime}\right|X_{i}\right)\right\} \right]\cap\left[0,1\right]\label{eq:ordered_choice_no_exogeneity}
\end{equation}
setting $\min\emptyset=-\infty$ and $\max\emptyset=+\infty$ to deal
with the case when all of $\mathcal{Y}$ provides an upper (lower)
bound.

\subsubsection{Theorem 2 bounds}

Fix $s\in\left\{ 1,\dots,T\right\} $. For each $y^{\prime}\in\mathcal{\underline{Y}}$,
compare $X_{is}\beta-\lambda_{sy^{\prime}}$ to $x\beta-\lambda_{1y}$,
and compute the sets:
\begin{align*}
\mathcal{Y}_{sL} & \equiv\left\{ y^{\prime}\in\mathcal{\underline{Y}}:x\beta-\lambda_{1y}\geq X_{is}\beta-\lambda_{sy^{\prime}}\right\} ,\\
\mathcal{Y}_{sU} & \equiv\left\{ y^{\prime}\in\mathcal{\underline{Y}}:x\beta-\lambda_{1y}\leq X_{is}\beta-\lambda_{sy^{\prime}}\right\} .
\end{align*}

The bounds under time-stationary errors are then the intersection
of the bounds in (\ref{eq:ordered_choice_no_exogeneity}) across \textit{all}
time periods:
\begin{equation}
\max_{s}\max_{y^{\prime}\in\mathcal{Y}_{sL}}\left\{ P\left(\left.Y_{is}\geq y^{\prime}\right|X_{i}\right)\right\} \leq P\left(\left.Y_{i1}\left(x\right)\geq y\right|X_{i}\right)\leq\min_{s}\min_{y^{\prime}\in\mathcal{Y}_{sU}}\left\{ P\left(\left.Y_{is}\geq y^{\prime}\right|X_{i}\right)\right\} .\label{eq:best_bounds}
\end{equation}

\subsection{Censored regression\label{subsec:Example:-censored-regression}}

The structural function
\[
Y_{it}=\max\left\{ 0,\alpha_{i}+X_{it}\beta-U_{it}\right\} 
\]
corresponds to a censored regression model.\footnote{For ease of exposition, we do not consider here extensions covered
by our setup such as $Y_{it}=\max\left\{ \lambda_{t},g_{t}\left(\alpha_{i}+X_{it}\beta-U_{it}\right)\right\} $
with time-varying censoring cutoff, and unknown time-varying $g$
function. For the identification and estimation of the common parameters
in censored regression models, see e.g. \citet{Honore1992,honorePairwiseDifferenceEstimators1994,honoreEstimationTobittypeModels2000a,charlierEstimationCensoredRegression2000,abrevayaIntervalCensoredRegression2020}.
For the cross-sectional case, see e.g. \citet{powellLeastAbsoluteDeviations1984,honorePairwiseDifferenceEstimators1994}.} For this model, \citet{honore_censored_marginal} shows that one
can point-identify a meaningful marginal effect using knowledge of
$\beta$. Because this model is encompassed by our framework, with
$\mathcal{Y}=\left[0,\infty\right)$ and 
\[
h_{t}^{-}\left(y\right)=h^{-}\left(y\right)=\begin{cases}
-\infty & \text{ if }y=0\\
y & \text{ else,}
\end{cases}
\]
we can use our Theorems 1 and 2 to generate additional point and partial
identification results for partial effects in this model.

The object of interest is
\[
\tau_{1,x,y}\left(X_{i}\right)=P\left(\left.Y_{i1}\left(x\right)\geq y\right|X_{i}\right).
\]

The case $y=0$ is not informative because $P\left(\left.Y_{i1}\left(x\right)\geq0\right|X_{i}\right)=1$.
Thus, we restrict attention to $y>0$ and use that $h_{1}^{-}\left(y\right)=y$.
If there exists a $y^{\prime}>0$ such that 
\[
X_{i1}\beta-y^{\prime}=x\beta-y,
\]
i.e. if 
\[
y^{\prime}\equiv y+\left(X_{i1}-x\right)\beta>0,
\]
then
\begin{align*}
Y_{i1}\geq y^{\prime}\Leftrightarrow & \alpha_{i}+X_{i1}\beta-U_{i1}\geq y^{\prime}\\
\Leftrightarrow & \alpha_{i}+X_{i1}\beta-U_{i1}\geq y+\left(X_{i1}-x\right)\beta\\
\Leftrightarrow & \alpha_{i}+x\beta-U_{i1}\geq y\\
\Leftrightarrow & Y_{i1}\left(x\right)\geq y
\end{align*}
so that Theorem \ref{thm:insight_no_exogeneity} implies point identification
\begin{align*}
P\left(\left.Y_{i1}\geq y^{\prime}>0\right|X_{i}\right) & =P\left(\left.Y_{i1}\left(x\right)\geq y\right|X_{i}\right).
\end{align*}

If $y+\left(X_{i1}-x\right)\beta\leq0$ then 
\[
X_{i1}\beta-h_{1}^{-}\left(0\right)\geq x\beta-h_{1}^{-}\left(y\right)
\]
because $h_{1}^{-}\left(0\right)=-\infty$, which yields the trivial
upper bound
\[
P\left(\left.Y_{i1}\left(x\right)\geq y\right|X_{i}\right)\leq1.
\]
Each $y^{\prime}>0$ provides a lower bound, since
\begin{align*}
y+\left(X_{i1}-x\right)\beta & \leq0<y^{\prime}.
\end{align*}
Then 
\[
X_{i1}\beta-h_{1}^{-}\left(y^{\prime}\right)=X_{i1}\beta-y^{\prime}\leq x\beta-y=x\beta-h_{1}^{-}\left(y\right)
\]
so that 
\[
P\left(\left.Y_{i1}\geq y^{\prime}\right|X_{i}\right)\leq P\left(\left.Y_{i1}\left(x\right)\geq y\right|X_{i}\right).
\]
Hence, Theorem \ref{thm:intersection_bound_strict_exogeneity} obtains
\[
\sup_{y^{\prime}>0}P\left(\left.Y_{i1}\geq y^{\prime}\right|X_{i}\right)\leq P\left(\left.Y_{i1}\left(x\right)\geq y\right|X_{i}\right)\leq1.
\]

With time-stationary errors, point identification occurs if there
exists a time period $t$ and a $y^{\prime}>0$ such that 
\[
X_{it}\beta-y^{\prime}=x\beta-y,
\]
because then 
\begin{align*}
P\left(\left.Y_{it}\geq y^{\prime}>0\right|X_{i}\right) & =P\left(\left.Y_{i1}\left(x\right)\geq y>0\right|X_{i}\right).
\end{align*}
This only requires the existence of one time period for which we can
find such a $y^{\prime}>0$.

Partial identification thus only results if, for \emph{each }time
period, 
\[
y+\left(X_{it}-x\right)\beta\leq0,
\]
in which case the resulting bound is
\[
\max_{t}\sup_{y^{\prime}>0}P\left(\left.Y_{it}\geq y^{\prime}\right|X_{i}\right)\leq P\left(\left.Y_{i1}\left(x\right)\geq y\right|X_{i}\right)\leq1.
\]
This partial identification result only applies when the subpopulation
$X_{i}$ is such that for each $t$, $y+\left(X_{it}-x\right)\beta\leq0$.

\section{Numerical experiments\label{subsec:Numerical-example}}

We report on two numerical experiments that explore the bounds in
Theorem 1 and Theorem \ref{thm:intersection_bound_strict_exogeneity}
for the two-way binary choice and ordered choice models. We show that
our bounds are informative without exogeneity or time-homogeneity
assumptions on the error terms. The bounds tighten as $T$ grows,
and they are more informative for ordered choice models than for binary
choice since the bounds tighten as the cardinality of $\underline{\mathcal{Y}}$
grows. Section \ref{subsec:Two-way-binary-choice-contX} presents
results for a two-way binary choice probit model with continuous regressors.
Section \ref{subsec:Staggered-adoption-with} presents results for
a staggered adoption design with both binary and ordered outcomes.

\subsection{Two-way binary choice probit with a continuous regressor\label{subsec:Two-way-binary-choice-contX}}

Consider the following data generating process for a binary choice
model with two-way fixed effects:
\begin{align*}
Y_{it} & =1\left\{ \alpha_{i}+X_{it}\times1-U_{it}\geq\lambda_{t}\right\} ,\\
X_{i1} & \sim\mathcal{N}\left(0,\sigma_{x}^{2}\right),\\
\left.X_{it}\right|X_{it-1} & \sim\mathcal{N}\left(\rho X_{it-1},\sigma_{x}^{2}\right),\\
\left.\alpha_{i}\right|X_{i} & \sim\mathcal{N}\left(X_{i1},1\right),\\
U_{it} & \sim\mathcal{N}\left(0,1\right),\\
\lambda_{t} & =\left(-1+2\frac{\left(t+1\right)}{T}\right)^{2}
\end{align*}
with $\sigma_{x}^{2}=1$ and $\rho=\frac{1}{2}$. The former parameter
controls the cross-sectional heterogeneity in $X_{i}$, while the
latter controls the degree of variation in the sequence $X_{i}$.
The smaller each parameter is, the tighter the bounds are expected
to be. 

Figure \ref{probit_oxford} plots the bounds on the sequence $\mathbb{E}\left[Y_{it}\left(0\right)\right]=\mathbb{E}_{X}\left[\tau_{t,0,1}\left(X_{i}\right)\right]$,
$t=1,2,\dots,T$. The bounds are computed according to Theorem 2.
We find that the bounds get tighter as the number of periods increases.
For example, the width of the interval is $0.26$ when computed with
up to 5 periods, $0.07$ when using all $T=20$ periods. In a separate
experiment with $T=100$ (not reported), the width shrinks to $0.01$
when using all periods. The identified region may not collapse to
a point as $T\rightarrow\infty$, since its width depends on the distribution
of $X_{i}$, the stationary distribution of $\left.\alpha_{i}-U_{it}\right|X_{i}$,
and the shape of $h_{t}$.
\begin{figure}
\includegraphics[scale=0.7]{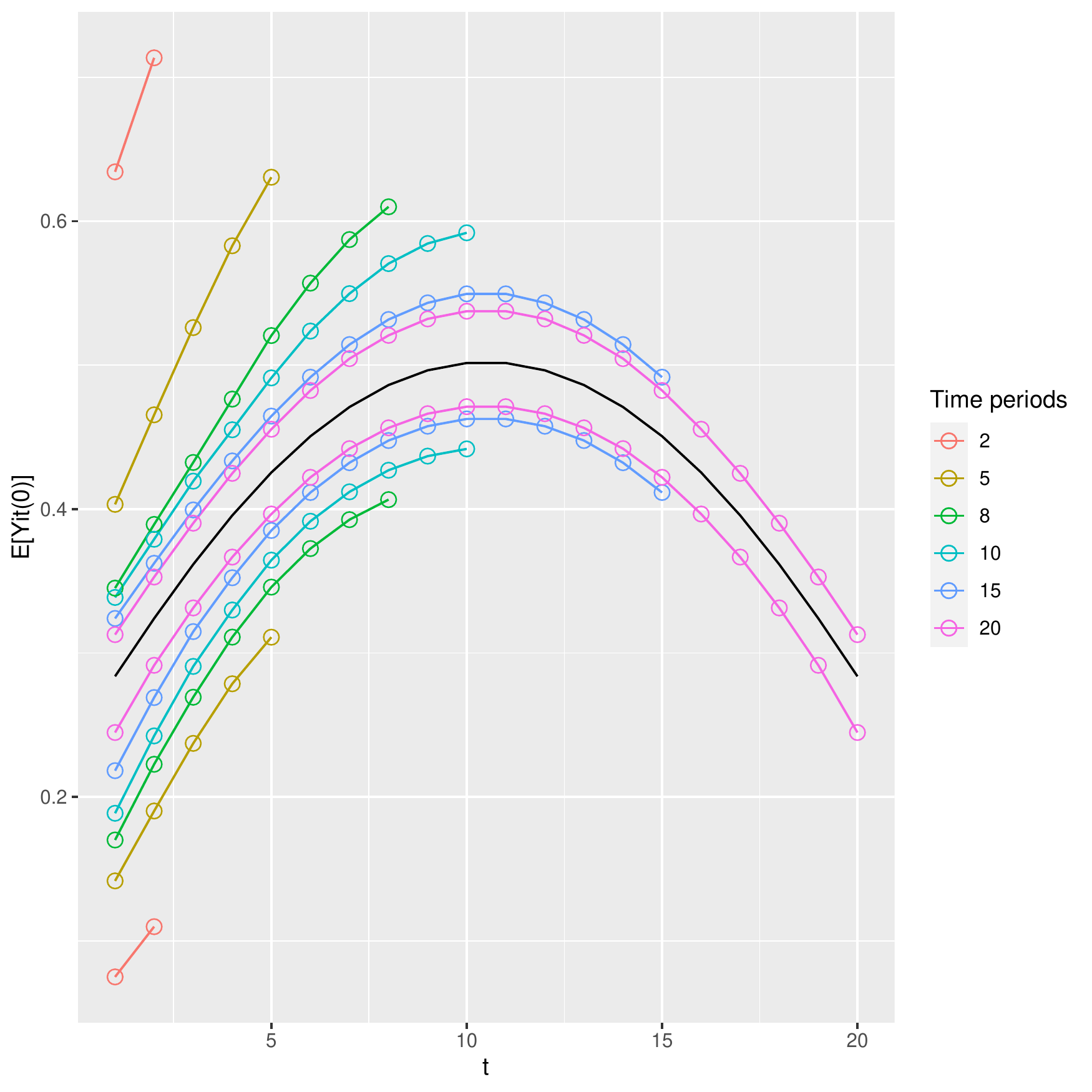}

\caption{Bounds for the two-way binary choice probit model in Section \ref{subsec:Two-way-binary-choice-contX}
with varying number of time periods $T\protect\leq20$. The horizontal
axis indicates the number of time periods $T$. The black solid curve
indicates the true value of $\mathbb{E}\left[Y_{it}\left(0\right)\right]$.
The colored curves are the upper and lower bounds from Theorem \ref{thm:intersection_bound_strict_exogeneity}
using the first $t$ time periods, where $t$ is labeled in the legend.}

\label{probit_oxford}
\end{figure}

Figure \ref{probit_oxford-2} presents results for the sequence $\mathbb{E}\left[Y_{it}\left(0\right)\right],t\geq1$
for some variations on the model above. The first row, left column
shows results for $T=8$, keeping the other design parameters unchanged.
Note that, because $\lambda_{t}=\left(-1+2\frac{\left(t+1\right)}{T}\right)^{2}$,
this changes the evolution of the expectation. All subpanels of Figure
\ref{probit_oxford-2} display results for a deviation from the top
left panel. All results in Figure \ref{probit_oxford-2} are for the
sequence $\mathbb{E}\left[Y_{it}\left(0\right)\right],t\geq1$, except
for the top right panel, which shows the bounds for the sequence $\mathbb{E}\left[Y_{it}\left(1\right)\right],t\geq1$,
under the same model as that for the top left panel. The bounds are
wider because $X_{it}$ has less mass around $1$ than around $0$.
The second row reports bounds on the sequence $\mathbb{E}\left[Y_{it}\left(0\right)\right],t\geq1$,
when there are no time-effects (left column) and when there is no
persistence in $X_{it}$, i.e. $\rho=0$ (right column). The bounds
are slightly wider when the transformation function is time-invariant,
and they are tighter when there is no persistence in $X_{i}$. The
third row shows the bounds when $\sigma_{x}^{2}$ is $1/4$ (left
column; compare to $\sigma_{x}^{2}=1$ in the benchmark case) and
$\sigma_{x}^{2}=4$ (right column). The bounds when there is smaller
cross-sectional variation in $X_{it}$ at a given time period $t$
are tighter than when there is greater cross-sectional variation.
\begin{figure}
\includegraphics[scale=0.7]{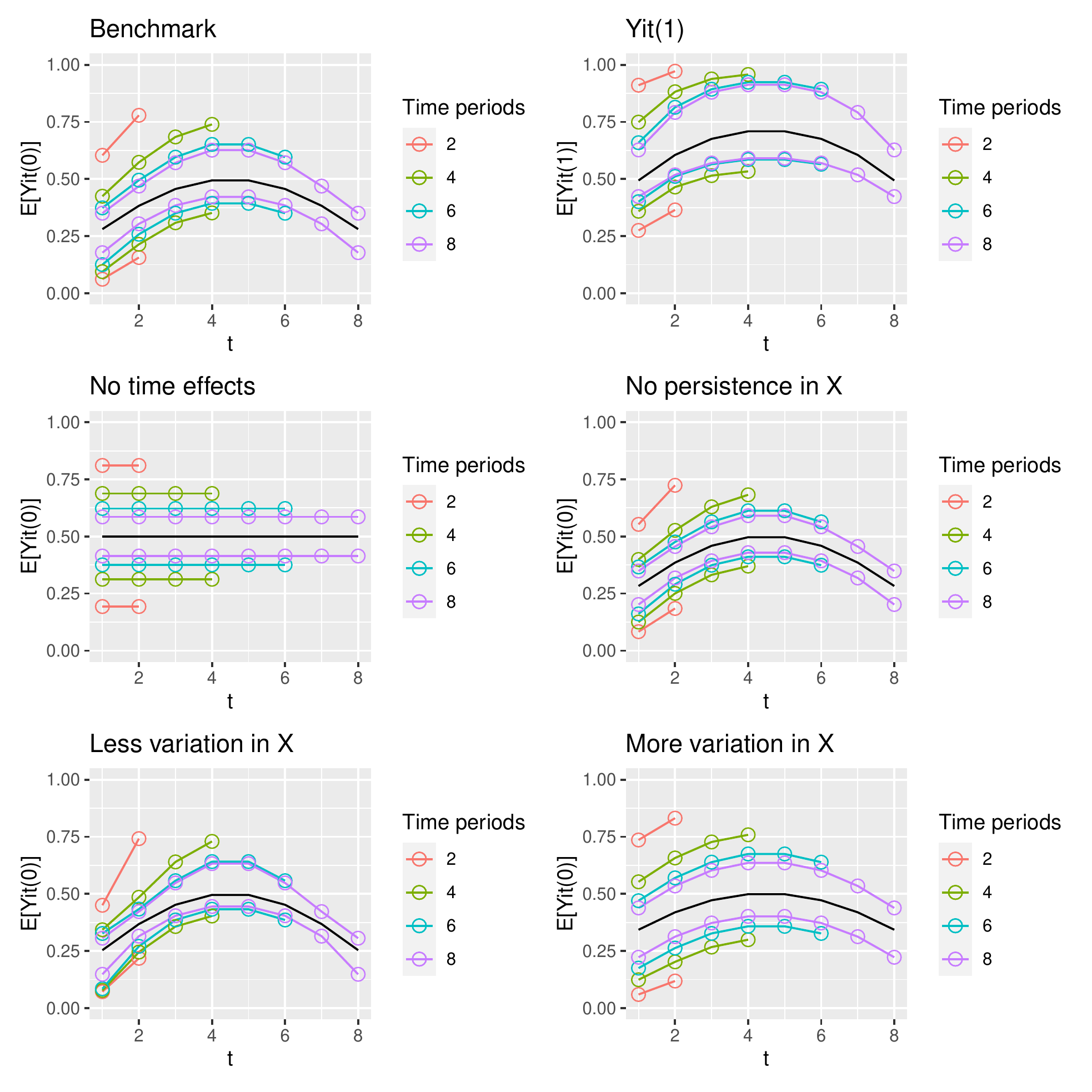}

\caption{Bounds for the two-way binary choice probit model in Section \ref{subsec:Two-way-binary-choice-contX},
and variations, with varying number of time periods $T\protect\leq8$.
All bounds are for $E\left[Y_{it}\left(0\right)\right]$, except for
the top right panel, which shows bounds for $E\left[Y_{it}\left(1\right)\right]$.
The black solid curve indicates the true value of the expectation,
and the colored curves are the upper and lower bounds from Theorem
\ref{thm:intersection_bound_strict_exogeneity} using the first $t$
time periods, where $t$ is labeled in the legend. See the main text
for a description of the model for each panel.}

\label{probit_oxford-2}
\end{figure}

For the specification consider in this section -- binary probit with
two way fixed-effects and short $T$ -- there are no other results
in the literature. In Appendix \ref{CFHN_experiment}, we present
another numerical exercise for binary probit with discrete regressors.
That DGP is the same as the one in Section 8 of \citet{chernozhukov2013average}
when there are no time-effects, i.e. $\lambda_{t}=0$ for all $t$.
In that case, we recover the bounds in \citet{chernozhukov2013average}.

\subsection{Staggered adoption with binary and ordered outcomes\label{subsec:Staggered-adoption-with}}

In this section, we consider a staggered adoption design with both
binary and $J$ ordered outcomes.

The population consists of $G$ groups and individuals in each group
$g\in\left\{ 1,\cdots,G\right\} $ are observed over $T$ periods.
Individuals are untreated up to and including period $g$; they are
treated at period $g+1$, and then stay treated for $t>g+1$, i.e.
\[
X_{it}=\begin{cases}
0 & \text{ if }t\leq g\left(i\right),\\
1 & \text{ if }t>g\left(i\right),
\end{cases}
\]
where $g\left(i\right)$ is individual $i$'s group. Here, $G=19$
and $T\in\left\{ 5,\dots,20\right\} $.

The latent outcome is given by 
\begin{align*}
Y_{it}^{*} & =\alpha_{i}+X_{it}\times1-U_{it},\\
\alpha_{i} & \sim\mathcal{N}\left(0,1\right)+\frac{g\left(i\right)}{G}-\frac{1}{2},
\end{align*}
and $U_{it}$ is standard logistic. The observed outcome is generated
as 
\[
Y_{it}\geq j\Leftrightarrow Y_{it}^{*}\geq\lambda_{jt},
\]
where $j\in J\in\left\{ 2,4,6,8\right\} $\footnote{For $J=2$, the outcome is binary, while for $J>2$, the outcome is
ordered.}, and the threshold $\lambda_{jt}$ is generated as follows. Set $\overline{J}\equiv\frac{J}{2}+1\in\left\{ 1,3,4,5\right\} $,
so that $\left\{ 1,\cdots,\overline{J}-1\right\} $ are the lowest
$J/2$ outcomes, and $\left\{ \overline{J},\cdots,J\right\} $ are
the top $J/2$ outcomes. Set $\lambda_{\overline{J}1}=0,\,\lambda_{\overline{J}t}\sim\mathcal{U}\left[-1,1\right]$,
then draw $\eta_{+},\eta_{-}\sim\mathcal{U}\left[0,1\right]$ and
construct 
\[
\lambda_{jt}=\begin{cases}
\lambda_{\overline{J}t}-\frac{\left(\overline{J}-j\right)}{\overline{J}-1}\eta_{-}, & \text{ if }j<\overline{J},\\
\lambda_{\overline{J}t}+\frac{\left(j-\overline{J}\right)}{\overline{J}-1}\eta_{+}, & \text{ if }j>\overline{J},
\end{cases}
\]
We draw one set of $\left(\left(\lambda_{\overline{J}t}\right)_{t=2}^{T},\eta_{+},\eta_{-}\right)$
and condition our results on them. This allows us to compare the same
parameter across different values of $J$ and $T$.

The parameter of interest is
\[
\tau\equiv\mathbb{E}\left[\tau_{1,1,\overline{J}}\left(X_{i}\right)\right]=\mathbb{E}\left[P\left(\left.Y_{i1}\left(1\right)\geq\overline{J}\right|X_{i}\right)\right]=P\left(Y_{i1}\left(1\right)\geq\overline{J}\right),\,\overline{J}\in\left\{ 1,3,4,5\right\} .
\]
This parameter gives the probability of being in the upper half of
possible outcomes.\footnote{A two-period version of this setup resembles the nonlinear difference-in-difference
setup in \citet{athey2006identification}. The results in this section
differ from theirs because we allow for the combination of fixed effects
and discrete outcomes, which is not covered by their results.}

Our results for this design are presented in Figure \ref{fig:staggered_adoption}.
The solid black line is $\tau=P\left(Y_{i1}\left(1\right)\geq\overline{J}\right)$.
Bounds for different values of $J$ are in colored, dashed lines.
Binary choice is in red. At $T=5$, the width of the bounds for $\tau$
are approximately $0.35$, and become as narrow as $0.05$ when $T=20$.
As $J$ increases, the bounds narrow substantially. This is especially
evident when $T$ is small. For example, at $T=5$, the width of the
bounds for $J=6$ is $0.07$, and for $J=8$ it is $0.02$.
\begin{figure}
\includegraphics[scale=0.7]{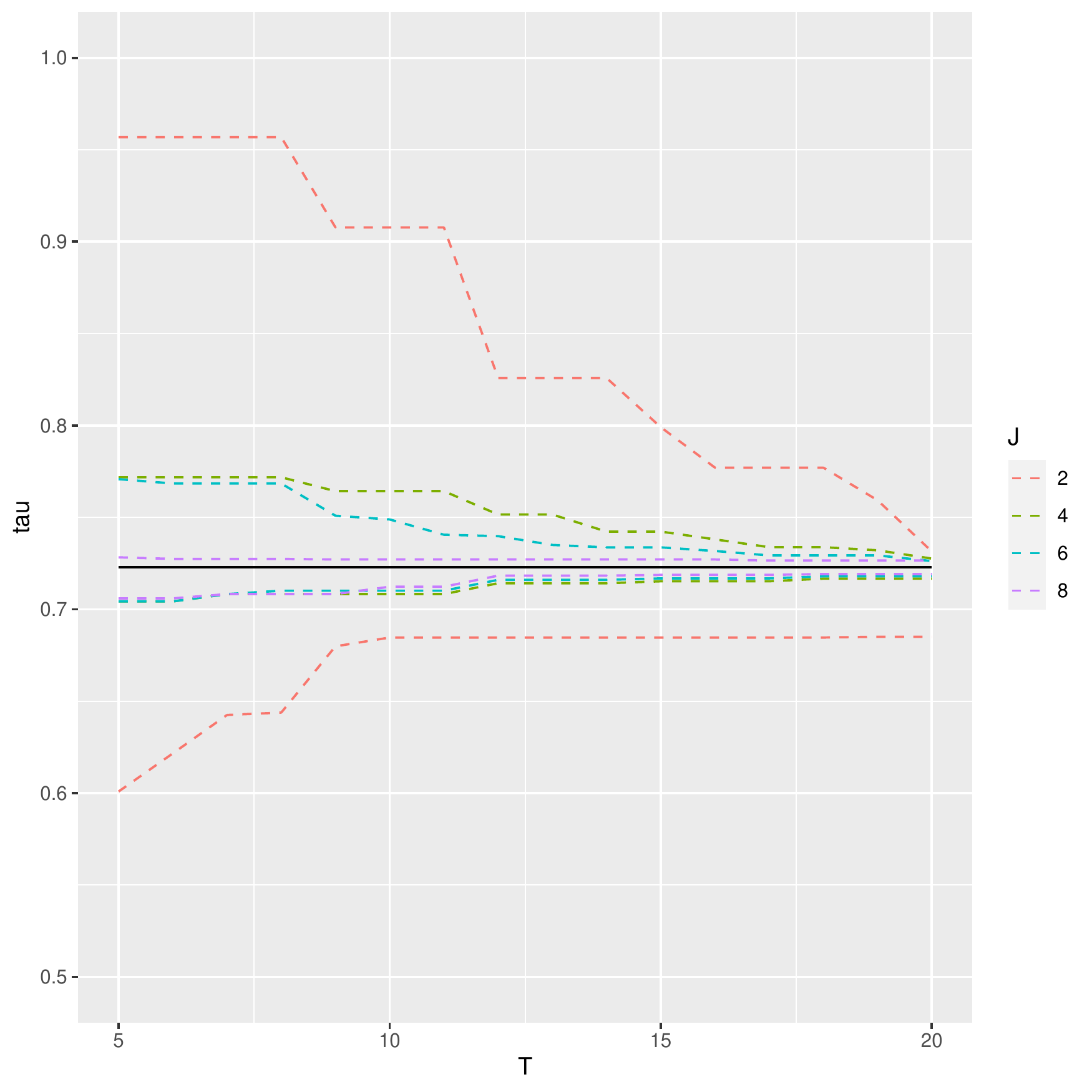}

\caption{Results for staggered adoption with ordered outcomes. The solid black
line is the true value of $\tau\equiv\mathbb{E}\left[\tau_{1,1,\overline{J}}\left(X_{i}\right)\right]$
(vertical axis). Each pair of dashed lines shows the bounds (vertical
axis) as a function of the number of time periods used to construct
the bound (horizontal axis). Colors indicate the number of support
points for the ordered outcome variables, so that red corresponds
to binary choice ($J=2$).}

\label{fig:staggered_adoption}
\end{figure}

Figure \ref{fig:staggered_adoption-2} provides further insight, and
further clarifies the construction of the bounds in Theorem \ref{thm:intersection_bound_strict_exogeneity}.
Consider $J=4$, $G=6$ and $T=8$. Each panel corresponds to a group
$g\left(i\right)$, so that panel ``3'' corresponds to the population
of individuals treated in periods $4-8$. The vertical error bars,
at each $t$, correspond to the best bounds that can be constructed
for the period-$1$ counterfactual using period-$t$ data, see Remark
\ref{rem:alternative_strict_exo_bound_formulation}: 
\[
\left[\sup_{y^{\prime}\in\mathcal{Y}_{tL}}P\left(\left.Y_{is}\geq y^{\prime}\right|X_{i}\right),\inf_{y^{\prime}\in\mathcal{Y}_{tU}}P\left(\left.Y_{is}\geq y^{\prime}\right|X_{i}\right)\right]\cap\left[0,1\right],
\]
with 
\begin{align*}
\mathcal{Y}_{tL} & \equiv\left\{ y^{\prime}\in\mathcal{\underline{\mathcal{Y}}}:X_{is}\beta-h_{s}^{-}\left(y^{\prime}\right)\leq x\beta-h_{t}^{-}\left(y\right)\right\} ,\\
\mathcal{Y}_{tU} & \equiv\left\{ y^{\prime}\in\underline{\mathcal{Y}}:X_{is}\beta-h_{s}^{-}\left(y^{\prime}\right)\geq x\beta-h_{t}^{-}\left(y\right)\right\} .
\end{align*}
From Figure \ref{fig:staggered_adoption-2}, it is clear that $J=4$
is not sufficient to provide non-trivial bounds in each period, see
for example periods 1-3. The bounds in period 4 are also trivial,
but in a way that complements the period 1-3 bounds. Furthermore,
the thresholds in periods 6-8 are such that these periods supply non-trivial
bounds. By using Assumption \ref{assu:Strict-exogeneity}, and taking
the best bounds in each panel over the time periods as in Theorem
\ref{thm:intersection_bound_strict_exogeneity}, relatively narrow
bounds are obtained on $\tau$, see Figure \ref{fig:staggered_adoption}.
\begin{figure}
\includegraphics[scale=0.7]{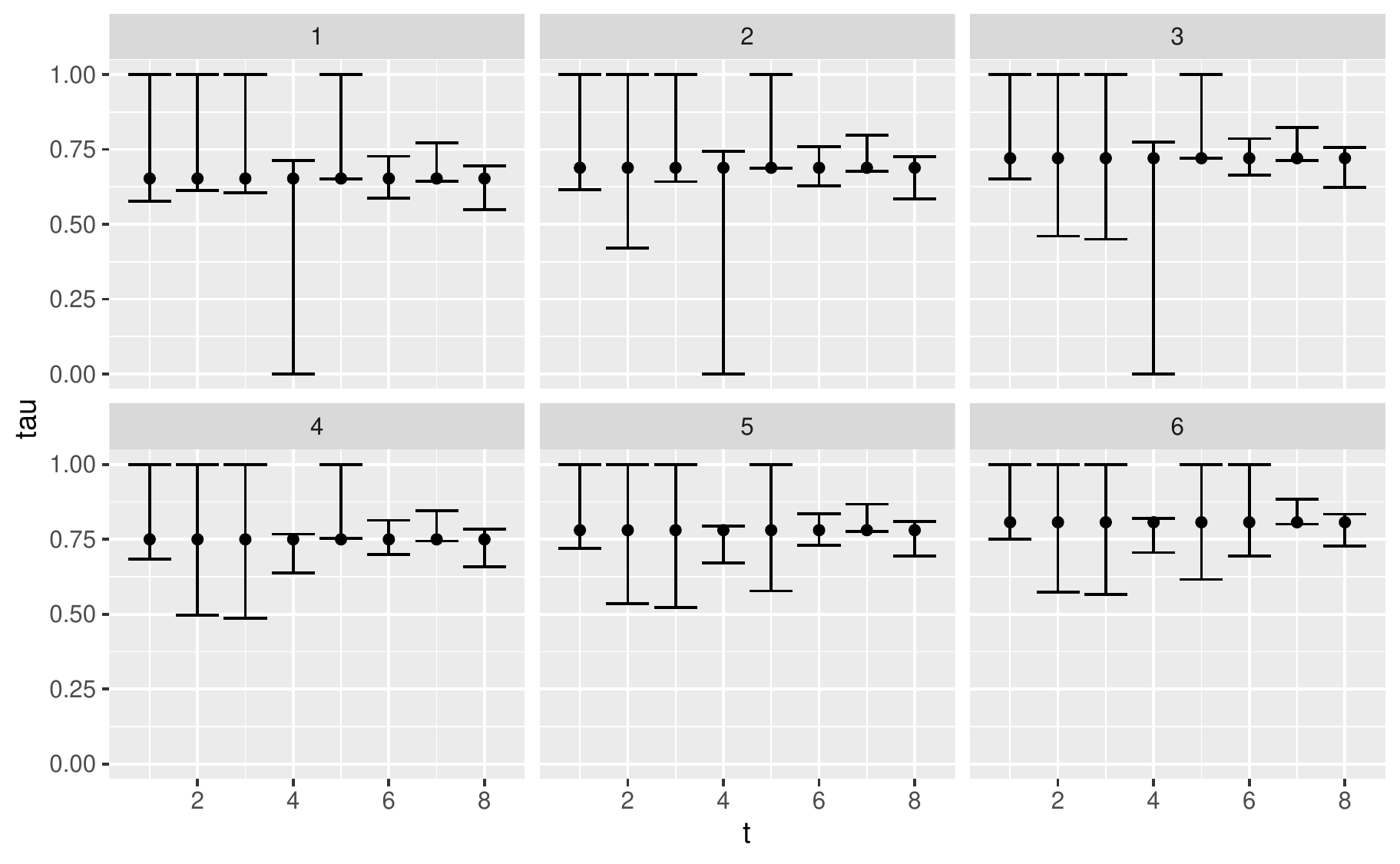}

\caption{Details for the staggered adoption results with $J=4$ and $T=8$.
The panels are numbered $g\in\left\{ 1,\cdots,6\right\} $. For each
panel, $t$ on the horizontal axis is the time period from which the
bound (error bar) is constructed for $P\left(\left.Y_{i1}\left(1\right)\protect\geq3\right|g\left(i\right)=g\right)$.}

\label{fig:staggered_adoption-2}
\end{figure}

\section{Conclusion}

This paper discusses identification of counterfactual parameters in
a class of nonlinear semiparametric panel models with fixed effects
and arbitrary time effects. We derive bounds on the counterfactual
survival probability and related functionals that depend on either
outcomes from the same time period as the counterfactual (without
exogeneity assumptions) or outcomes from across all available time
periods (under a ``strict exogeneity'' or conditional time-stationarity
assumption on the errors). The bounds tighten as the cardinality of
the support of the dependent variable increases, and, under a time-stationarity
assumption on the errors, as $T$ increases. The bounds need not collapse
to a point as $T$ grows, rather they collapse to a point under particular
assumptions on the time effects and the observed regressors, i.e.
when the counterfactual index equals one of the observed indexes.
Our bounds are valid for continuous and discrete outcomes and covariates,
for both movers and stayers.

Although our focus is on identification, we describe here a potential
way of addressing the issue of inference. Note that the bounds in
Theorem \ref{thm:intersection_bound_strict_exogeneity} can be written
as a set of conditional moment inequalities: for fixed $t,x,X_{i},y\in\underline{\mathcal{Y}}$
and for all $\left(s,y^{\prime}\in\underline{\mathcal{Y}}\right)$:
\[
\mathbb{E}\left[\left.\left(X_{is}\beta-h_{s}^{-}\left(y^{\prime}\right)-\left(x\beta-h_{t}^{-}\left(y\right)\right)\right)\times\left(1\left\{ Y_{is}\geq y^{\prime}\right\} -\tau_{t,x,y}\left(X_{i}\right)\right)\right|X_{i}\right]\geq0.
\]
When $\left(\beta,h_{t}\right)$ are partially identified via a set
of moment inequalities, these moment inequalities can be added to
the program. For either pointwise or uniform in $x$ inference, one
could implement the full vector approach of \citet{CoxShi} if the
regressors have finite support, or \citet{AndrewsShi} if the regressors
are continuously distributed.

The bounds of Theorem \ref{thm:insight_no_exogeneity} are valid for
dynamic panel models. However, the bounds may be wider than those
obtained by studying specific models, where the dynamic structure
is known, e.g., binary choice with lagged outcomes as covariates.
In ongoing work, we are studying how to extend our approach to such
models. It is an open question as to whether our approach could be
extended to nonseparable models.

\bibliographystyle{chicago}
\bibliography{new_refs}

\appendix

\section{Proofs}
\begin{proof}[Proof of Theorem \ref{thm:insight_no_exogeneity}.]
 Define the (potential) latent variables
\begin{align}
Y_{it}^{*}\left(x\right) & \equiv\alpha_{i}+x\beta-U_{it},\label{eq:latent variable}\\
Y_{it}^{*} & \equiv Y_{it}^{*}\left(X_{it}\right).\nonumber 
\end{align}

Assumption \ref{assu:weak_mono} obtains the following equivalent
relationships:
\begin{align}
 & Y_{it}\left(x\right)\geq y\nonumber \\
 & \Leftrightarrow Y_{it}^{*}\left(x\right)\geq h_{t}^{-}\left(y\right)\\
 & \Leftrightarrow U_{it}-\alpha_{i}\leq x\beta-h_{t}^{-}\left(y\right),\label{eq:monotonicity_implication}
\end{align}
that is, the counterfactual outcome being greater than a fixed value
$y$ is equivalent to the random variable $U_{it}-\alpha_{i}$ being
smaller than the counterfactual index evaluated at $y$.

Then, for any value $y^{\prime}\in\mathcal{Y}$, the following are
equivalent:
\begin{align*}
X_{it}\beta-h_{t}^{-}\left(y^{\prime}\right)\geq & x\beta-h_{t}^{-}\left(y\right)\\
\Leftrightarrow\\
P\left(\left.Y_{it}\geq y^{\prime}\right|X_{i}\right)\geq & P\left(\left.Y_{it}\left(x\right)\geq y\right|X_{i}\right)=\tau_{t,x,y}\left(X_{i}\right).
\end{align*}

To see this, suppose that there exists a $y^{\prime}$ such that 
\[
X_{it}\beta-h_{t}^{-}\left(y^{\prime}\right)\geq x\beta-h_{t}^{-}\left(y\right).
\]
 Then:
\begin{align*}
P\left(\left.Y_{it}\geq y^{\prime}\right|X_{i}\right) & =P\left(\left.Y_{it}^{*}\geq h_{t}^{-}\left(y^{\prime}\right)\right|X_{i}\right)\\
 & =P\left(\left.\alpha_{i}+X_{it}\beta-U_{it}\geq h_{t}^{-}\left(y^{\prime}\right)\right|X_{i}\right)\\
 & =P\left(\left.U_{it}-\alpha_{i}\leq X_{it}\beta-h_{t}^{-}\left(y^{\prime}\right)\right|X_{i}\right)\\
 & \geq P\left(\left.U_{it}-\alpha_{i}\leq x\beta-h_{t}^{-}\left(y\right)\right|X_{i}\right)\\
 & =P\left(\left.\alpha_{i}+x\beta-U_{it}\geq h_{t}^{-}\left(y\right)\right|X_{i}\right)\\
 & =P\left(\left.Y_{it}\left(x\right)\geq y\right|X_{i}\right)\\
 & =\tau_{t,x,y}\left(X_{i}\right),
\end{align*}
where the inequality follows by weak monotonicity of CDFs.
\end{proof}
\begin{proof}[Proof of Theorem \ref{thm:intersection_bound_strict_exogeneity}]
\citealt{chernozhukov2015nonparametric} point out that Assumption
\ref{assu:Strict-exogeneity} is equivalent to $\left.\alpha_{i},U_{it}\right|X_{i}\stackrel{d}{=}\left.\alpha_{i},U_{i1}\right|X_{i}$.
In particular,
\begin{equation}
\left.U_{it}-\alpha_{i}\right|X_{i}\stackrel{d}{=}\left.U_{is}-\alpha_{i}\right|X_{i},\,s,t\in\left\{ 1,\dots,T\right\} .\label{eq:impliedbystat}
\end{equation}

Then, for any $\left(x,X_{i},t,y,s,y^{\prime}\right)$ we have that
\begin{align*}
X_{is}\beta-h_{s}^{-}\left(y^{\prime}\right)\geq & x\beta-h_{t}^{-}\left(y\right)\\
\Leftrightarrow\\
P\left(\left.Y_{is}\geq y^{\prime}\right|X_{i}\right)\geq & P\left(\left.Y_{it}\left(x\right)\geq y\right|X_{i}\right)=\tau_{t,x,y}\left(X_{i}\right).
\end{align*}
This follows from 
\begin{align*}
P\left(\left.Y_{is}\geq y^{\prime}\right|X_{i}\right) & =P\left(\left.Y_{is}^{*}\geq h_{s}^{-}\left(y^{\prime}\right)\right|X_{i}\right)\\
 & =P\left(\left.\alpha_{i}+X_{is}\beta-U_{is}\geq h_{s}^{-}\left(y^{\prime}\right)\right|X_{i}\right)\\
 & =P\left(\left.U_{is}-\alpha_{i}\leq X_{is}\beta-h_{s}^{-}\left(y^{\prime}\right)\right|X_{i}\right)\\
 & =P\left(\left.U_{it}-\alpha_{i}\leq X_{is}\beta-h_{s}^{-}\left(y^{\prime}\right)\right|X_{i}\right)\\
 & \geq P\left(\left.U_{it}-\alpha_{i}\leq x\beta-h_{t}^{-}\left(y\right)\right|X_{i}\right)\\
 & =P\left(\left.\alpha_{i}+x\beta-U_{it}\geq h_{t}^{-}\left(y\right)\right|X_{i}\right)\\
 & =P\left(\left.Y_{it}\left(x\right)\geq y\right|X_{i}\right)\\
 & =\tau_{t,x,y}\left(X_{i}\right).
\end{align*}
The argument is similar to that in the proof of Theorem \ref{thm:insight_no_exogeneity},
except that the fourth equality follows by (\ref{eq:impliedbystat}),
and the fifth inequality follows by weak monotonicity of CDFs.
\end{proof}

\section{Numerical experiment with discrete regressors\label{CFHN_experiment}}

We consider here a numerical exercise inspired by Section 8 in \citet{chernozhukov2013average}:
a probit model with discrete regressors. When the structural equation
is time-invariant, the data generating process corresponds to that
in \citet{chernozhukov2013average}. We present results for, first,
the time-invariant case -- for which our bounds correspond to those
in \citet{chernozhukov2013average}, and then for a time-varying case.

The data generating process is:
\begin{align}
Y_{it} & =1\left\{ \alpha_{i}+X_{it}\beta-U_{it}-\lambda_{t}\geq0\right\} ,\,t=1,2,\label{eq:appB_model}\\
\alpha_{i} & \sim\mathcal{N}\left(0,1\right),\nonumber \\
U_{it} & \sim\mathcal{N}\left(0,1\right),\nonumber \\
\eta_{it} & \sim\mathcal{N}\left(0,1\right),\,t=1,2,\nonumber \\
X_{it} & =1\left\{ \alpha_{i}\geq\eta_{it}\right\} ,\,t=1,2,\nonumber \\
\lambda_{1} & =0,\nonumber \\
\lambda_{2} & \in\left\{ 0,\frac{1}{2},1\right\} ,\nonumber \\
\beta & \in\left[0,2\right].\nonumber 
\end{align}

We consider the ATE at $t=1,2$, defined as:
\begin{align*}
ATE_{t} & \equiv\mathbb{E}_{X}\left(P\left(\left.Y_{it}\left(1\right)\geq1\right|X_{i}\right)-P\left(\left.Y_{it}\left(0\right)\geq1\right|X_{i}\right)\right)\\
 & =\mathbb{E}_{X}\left(\tau_{t,1,1}\left(X_{i}\right)-\tau_{t,0,1}\left(X_{i}\right)\right).
\end{align*}
Recall that $\tau_{t,0,1}\left(X_{i}\right)$ refers to the counterfactual
probability that in period $t$ (first index), the probability that
the counterfactual outcome under $x=0$ (second index) for the the
subpopulation with $X_{i}$ equals or exceeds $y=1$ (third index).
For simplicity, we set $t=1$, so that the counterfactual index for
the analysis that follows is:
\begin{equation}
x\beta-\lambda_{1}=0.\label{eq:counter_idex}
\end{equation}
According to Theorem 2, if the observed index is smaller (greater)
than the counterfactual index (\ref{eq:counter_idex}), the observed
probability associated with the observed index provides a lower (upper)
bound on the counterfactual probability, while if the observed index
equals the counterfactual index, the observed probability point identifies
the counterfactual probability.

First, in order to compare our results to those in \citet{chernozhukov2013average},
we set $\lambda_{2}=0$.\footnote{In this case, there are no time effects, so the outcome equation is
time-homogeneous, as required by the results of \citet{chernozhukov2013average}.} Theorem 2 then obtains the following results for $\tau_{1,0,1}\left(X_{i}\right)$
and $\tau_{1,1,1}\left(X_{i}\right)$. Note that $\tau_{1,0,1}\left(X_{i}\right)$
is point-identified for the subpopulations with 
\[
X_{i}\in\left\{ \left(0,0\right),\left(0,1\right),\left(1,0\right)\right\} ,
\]
because for the listed subpopulations, the value $x=0$ is observed
in at least one of the periods. Thus:
\begin{align*}
\tau_{1,0,1}\left(\left(0,0\right)\right) & =P\left(\left.Y_{i1}\geq1\right|X_{i}=\left(0,0\right)\right)=P\left(\left.Y_{i2}\geq1\right|X_{i}=\left(0,0\right)\right),\\
\tau_{1,0,1}\left(\left(0,1\right)\right) & =P\left(\left.Y_{i1}\geq1\right|X_{i}=\left(0,1\right)\right),\\
\tau_{1,0,1}\left(\left(1,0\right)\right) & =P\left(\left.Y_{i2}\geq1\right|X_{i}=\left(1,0\right)\right).
\end{align*}

For the subpopulation of \emph{stayers} with $X_{i}=\left(1,1\right)$,
the value $x=0$ is not observed in either time period, so the counterfactual
probability $\tau_{1,0,1}\left(\left(1,1\right)\right)$ is partially
identified unless $\beta=0$:
\[
\tau_{1,0,1}\left(\left(1,1\right)\right)\in\begin{cases}
\left[0,P\left(\left.Y_{i1}\geq1\right|X_{i}=\left(1,1\right)\right)\right], & \beta>0,\\
\left[P\left(\left.Y_{i1}\geq1\right|X_{i}=\left(1,1\right)\right),1\right], & \beta<0,\\
\left\{ P\left(\left.Y_{i1}\geq1\right|X_{i}=\left(1,1\right)\right)\right\} . & \beta=0.
\end{cases}
\]
This is because the observed index for this subpopulation is $\beta$
in both time periods.

The analysis for $\tau_{1,1,1}\left(X_{i}\right)$ is similar and
omitted. Figure \ref{probit_CFHN} plots the bounds for $ATE_{t=1}$.
Note that our bounds for this particular example coincide with those
in \citet{chernozhukov2013average} that the authors call ``NPM.''
\begin{figure}
\includegraphics[scale=0.4]{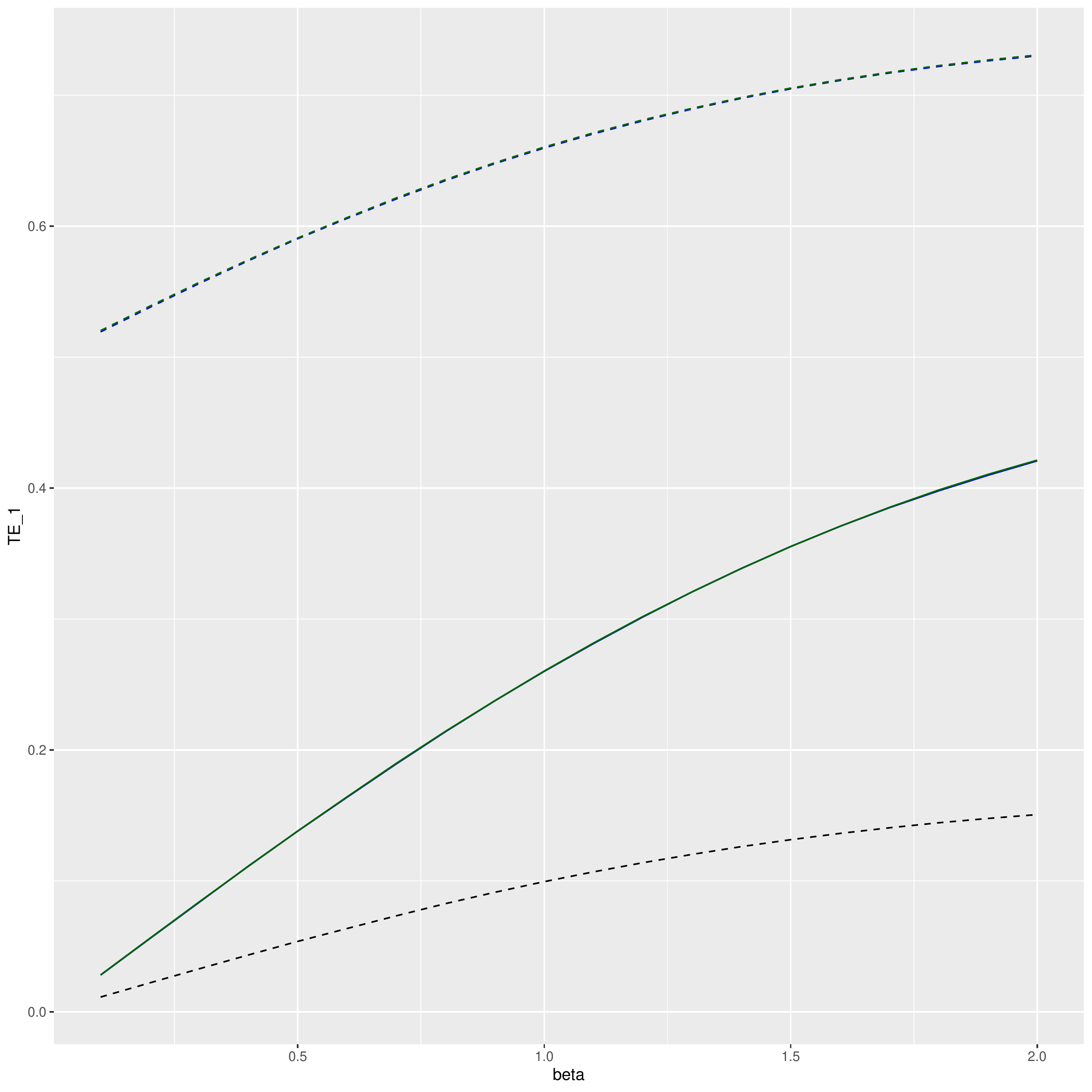}

\caption{Bounds for binary choice probit model with discrete regressors and
no time effects in (\ref{eq:appB_model}). The solid line shows the
true ATE at $t=1$ on the vertical axis, with the true value of $\beta$
on the horizontal axis. The dotted lines show the upper and lower
bounds from Theorem \ref{thm:intersection_bound_strict_exogeneity},
described in the main text of Appendix \ref{CFHN_experiment}.}

\label{probit_CFHN}
\end{figure}

Consider now the specification with positive time effects, $\lambda_{2}>0$.
Theorem 2 then obtains the following results for $\tau_{1,0,1}\left(X_{i}\right)$
and $\tau_{1,1,1}\left(X_{i}\right)$. 

First, $\tau_{1,0,1}\left(X_{i}\right)$ is point-identified for the
subpopulations with $X_{i1}=0$: the stayers with $X_{i}=\left(0,0\right)$
and the movers with $X_{i}=\left(0,1\right)$. This is because $x=0$
at $t=1$ for those subpopulations, so that the observed index for
these subpopulations equals the counterfactual index. Thus, 
\begin{align*}
\tau_{1,0,1}\left(\left(0,0\right)\right) & =P\left(\left.Y_{i1}\geq1\right|X_{i}=\left(0,0\right)\right),\\
\tau_{1,0,1}\left(\left(0,1\right)\right) & =P\left(\left.Y_{i1}\geq1\right|X_{i}=\left(0,1\right)\right).
\end{align*}

Second, for the subpopulation of movers with $X_{i}=\left(1,0\right)$,
the period $t=2$ observed probability provides a lower bound on $\tau_{1,0,1}\left(X_{i}\right)$
because the observed index for this subpopulation is $X_{i2}\beta-\lambda_{2}=-\lambda_{2}$,
which is smaller than the counterfactual index since we assumed that
$\lambda_{2}>0$, so:
\[
\tau_{1,0,1}\left(\left(1,0\right)\right)\geq P\left(\left.Y_{i2}\geq1\right|\left(1,0\right)\right).
\]
Whether the $t=1$ observed probabilities point or partially identify
$\tau_{1,0,1}\left(\left(1,0\right)\right)$ for this subpopulation
depends on how the observed index $X_{i1}\beta-\lambda_{1}=\beta$
compares to the counterfactual index in (\ref{eq:counter_idex}):
\begin{align*}
\tau_{1,0,1}\left(\left(1,0\right)\right) & =P\left(\left.Y_{i1}\geq\right|\left(1,0\right)\right),\text{ if }\beta=0,\\
\tau_{1,0,1}\left(\left(1,0\right)\right) & <P\left(\left.Y_{i1}\geq\right|\left(1,0\right)\right),\text{ if }\beta>0,\\
\tau_{1,0,1}\left(\left(1,0\right)\right) & >P\left(\left.Y_{i1}\geq\right|\left(1,0\right)\right),\text{ if }\beta<0.
\end{align*}

Third, for the subpopulation of stayers with $X_{i}=\left(1,1\right)$,
the observed index at $t=1$ is $X_{i1}\beta-\lambda_{1}=\beta$,
so that comparing it to the counterfactual index in (\ref{eq:counter_idex})
obtains: 
\begin{align*}
\tau_{1,0,1}\left(\left(1,1\right)\right) & =P\left(\left.Y_{i1}\geq\right|\left(1,1\right)\right),\text{ if }\beta=0,\\
\tau_{1,0,1}\left(\left(1,1\right)\right) & <P\left(\left.Y_{i1}\geq\right|\left(1,1\right)\right),\text{ if }\beta>0,\\
\tau_{1,0,1}\left(\left(1,1\right)\right) & >P\left(\left.Y_{i1}\geq\right|\left(1,1\right)\right),\text{ if }\beta<0,
\end{align*}
while the observed index at $t=2$ for this subpopulation is $X_{i2}\beta-\lambda_{2}=\beta-\lambda_{2}$,
which compared to the same counterfactual index in (\ref{eq:counter_idex})
obtains: 
\begin{align*}
\tau_{1,0,1}\left(\left(1,1\right)\right) & =P\left(\left.Y_{i2}\geq\right|\left(1,1\right)\right),\text{ if }\beta=\lambda_{2},\\
\tau_{1,0,1}\left(\left(1,1\right)\right) & >P\left(\left.Y_{i2}\geq\right|\left(1,1\right)\right),\text{ if }\beta<\lambda_{2},\\
\tau_{1,0,1}\left(\left(1,1\right)\right) & <P\left(\left.Y_{i2}\geq\right|\left(1,1\right)\right),\text{ if }\beta>\lambda_{2}.
\end{align*}

We have now described all the restrictions underlying Theorem 2 for
$\tau_{1,0,1}\left(X_{i}\right)$ for all $X_{i}$ in our example.
The bounds now follow by selecting the best bounds for each case and
for each $X_{i}$ as in Theorem 2. The same analysis can be done for
$\tau_{1,1,1}\left(X_{i}\right)$ and for period 2 counterfactuals. 

Figure \ref{probit_CFHN_time} plots the bounds for ATE$_{t=1}$ (left
panel) and ATE$_{t=2}$ (right panel). The solid lines correspond
to the true ATEs, while the dashed lines correspond to their respective
bounds, grouped by color. The colors correspond to different values
of $\lambda_{2}$. For example, in both panels, the red lines correspond
to the true ATE and its bounds when $\lambda_{2}=0$ (no time-effects).
In the left panel, all true ATEs have the same value since there are
no time-effects, $\lambda_{1}=0$, while in the right panel the true
ATEs have different values because they correspond to different time
effects.
\begin{figure}
\includegraphics[scale=0.7]{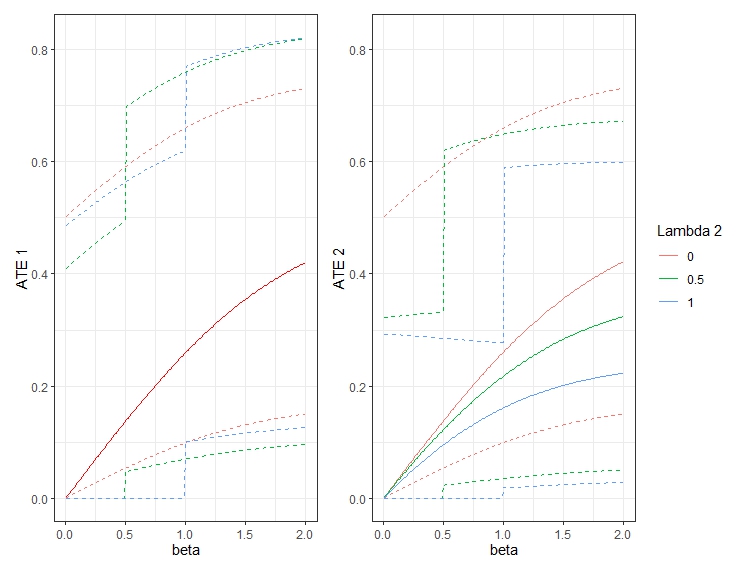}

\caption{Bounds for the ATE for a probit model with discrete regressors and
time effects at different time periods. The solid lines show the true
ATE at time 1 (left panel) and time 2 (right panel), while the dotted
lines show the bounds on the ATEs. The colors correspond to the three
different values of $\lambda_{2}.$ The dotted lines corresponding
to $\lambda_{2}=0$ in the left panel correspond to the bounds for
the binary probit model without time effects and with discrete regressors
as in \citet{chernozhukov2013average}. The bounds corresponding to
$\lambda_{2}\protect\neq0$ correspond to the bounds for the binary
probit with time effects and discrete regressors computed via Theorem
\ref{thm:intersection_bound_strict_exogeneity}.}

\label{probit_CFHN_time}
\end{figure}

\end{document}